\pgfplotsset{compat=1.8}
\definecolor{color0}{HTML}{636EFA}
\definecolor{color1}{HTML}{EF553B}
\definecolor{color2}{HTML}{00CC96}
\definecolor{color3}{HTML}{AB63FA}
\definecolor{color4}{HTML}{FFA15A}
\definecolor{color5}{HTML}{19D3F3}
\definecolor{color6}{HTML}{FF6692}
\definecolor{color7}{HTML}{B6E880}
\definecolor{color8}{HTML}{FF97FF}
\definecolor{color9}{HTML}{FECB52}
\let \mc   = \mathcal
\let \mt   = \mathtt
\let \mbb  = \mathbb
\let \tu   = \textup
\let \eq   = \equiv
\let \sube = \subseteq
\let \sm   = \setminus
\let \es   = \varnothing
\let \ta   = \rightarrow
\let \at   = \leftarrow
\newcommand{\pa}[2]{\ensuremath{\textrm{\tu{pa}}_{#1}(#2)}}
\newcommand{\ch}[2]{\ensuremath{\textrm{\tu{ch}}_{#1}(#2)}}
\newcommand{\s}[1][0.5]{\ensuremath{\mkern #1 mu}}
\newcommand{\ind}{{\;\perp\!\!\!\perp\;}}
\newtheorem{conv}{Convention}
\begin{document}

\title{Better Simulations for Validating Causal Discovery with the DAG-Adaptation of the Onion Method}

\author{\name Bryan Andrews \email andr1017@umn.edu \\
        \addr Department of Psychiatry \& Behavioral Sciences, University of Minnesota
        \AND
        \name Erich Kummerfeld \email erichk@umn.edu \\
        \addr Institute for Health Informatics, University of Minnesota}

\editor{}

\maketitle

\begin{abstract} 
    The number of artificial intelligence algorithms for learning causal models from data is growing rapidly. Most ``causal discovery'' or ``causal structure learning'' algorithms are primarily validated through simulation studies. However, no widely accepted simulation standards exist and publications often report conflicting performance statistics---even when only considering publications that simulate data from linear models. In response, several manuscripts have criticized a popular simulation design for validating algorithms in the linear case.

    We propose a new simulation design for generating linear models for directed acyclic graphs (DAGs): the DAG-adaptation of the Onion (DaO) method. DaO simulations are fundamentally different from existing simulations because they prioritize the distribution of correlation matrices rather than the distribution of linear effects. Specifically, the DaO method uniformly samples the space of all correlation matrices consistent with (i.e. Markov to) a DAG. We also discuss how to sample DAGs and present methods for generating DAGs with scale-free in-degree or out-degree. We compare the DaO method against two alternative simulation designs and provide implementations of the DaO method in Python and R: \url{https://github.com/bja43/DaO_simulation}. We advocate for others to adopt DaO simulations as a fair universal benchmark.
\end{abstract}

\begin{keywords}
    Causal Discovery, DAG Models, Correlation, Structural Equation Models, Simulation, Validation, Uniform Sampling
\end{keywords}

\section{Introduction}
\label{sec:introduction}

    Learning causal relationships is central to many areas of scientific research, including medicine, climate science, economics, and psychology. Due to the difficulties of setting up proper randomized experiments and the growth in available non-experimental datasets, causal relationships are increasingly being learned from observational data \citep{Pearl2020-xc}. Another factor contributing to this trend is the increasing volume of new causal discovery algorithms (CDAs) being published every year \citep{Spirtes1993-od, Spirtes2010-cg, Eberhardt2017-wi, Malinsky2018-mi, Glymour2019-ce}. CDAs output graphs intended to accurately represent causal relationships between the variables in an input dataset, but lack finite-sample guarantees without unrealistic assumptions \citep{kalisch2007estimating, Spirtes2014-au, uhler2013geometry, pmlr-v177-wang22a}. Moreover, nearly all observational datasets lack a comprehensive causal ground truth which makes simulation studies the primary means of CDA validation \citep{Ramsey2017-rt,Ramsey2020-wn,rios2021benchpress,kummerfeld2023power}.

    Unfortunately, CDA performance can heavily depend on simulation design and the field currently has no simulation standards. Without field standards: (i) algorithm developers can cherry-pick simulations that emphasize their method's strengths or their competitor's weaknesses and (ii) arbitrarily designed simulations may randomly favor certain CDAs over others for unknown reasons. As a consequence, several manuscripts warn that careless simulation practices have biased many well-cited studies \citep{reisach2021beware, kaiser2022unsuitability, reisach2023simple, ng2024structure}.
    
    In this paper, we present a CDA simulation design as a potential field standard for the linear case: the DAG-adaptation of the Onion (DaO) method. The DaO method takes as input a directed acyclic graph (DAG) $\mc G$, and outputs a correlation matrix $R$. The matrix $R$ is sampled uniformly at random over the space of all correlation matrices satisfying the Markov property for $\mc G$. The specific approach we use is adapted from the Onion method of \cite{ghosh2003behavior, ghosh2009corrigendum} by modifying it for DAG models. The matrix $R$ and $\mc G$ define a parametric model that we can use to efficiently generate data. The result of running CDAs on the simulated data can be compared to $\mc G$ to determine their performance.
    
    The DaO method: (i) avoids prioritizing any correlation matrix over another, (ii) ensures that all possible correlation matrices are represented, (iii) does not have tuning parameters that could be used to selectively benefit some CDAs over others, and (iv) offers a fair method for evaluating CDAs. These benefits necessarily come at a cost: the DaO method produces a distribution of models that may not reflect real-world causal systems. For example, data generated from DaO simulations do not resemble brain imaging data, climate science data, omics data, or electronic health record data. The DaO method is intended for generic domain-free simulations. This method could plausibly be extended to specific data domains in the future, but such an extension is outside the scope of the current paper.

    To provide a complete simulation procedure, we also present a method for generating DAGs with scale-free in-degree or out-degree. The generated distribution of DAGs aligns with the classic scale-free graph distribution of Price's model. The implementation of these methods starts with Erd\H{o}s R{\'e}nyi graph sampling and adds no additional tuning parameters. Combining this with the DaO method creates a simulation study design with no tuning parameters beyond those needed for Erd\H{o}s R{\'e}nyi graph sampling. We advise that future CDA simulation studies use the design we present here as a uniform and fair benchmark.

    \textbf{Paper Organization.} We first summarize novel contributions of this paper, briefly consider the motivation for uniformly sampling the space of correlation matrices, and briefly review previous work on problematic simulation study designs for validating CDAs. Section \ref{sec:background} provides some necessary notation, definitions, and other background information. Section \ref{sec:methods} presents our methods for sampling DAGs, the DaO method, and relevant correctness proofs. Section \ref{sec:eval_of_params} presents an empirical evaluation that (i) compares models produced by the DaO method against models produced by two other simulation designs, (ii) demonstrates pitfalls encountered by (at least one of) the other simulation designs, and (iii) demonstrates that the DaO method avoids these pitfalls. Section \ref{sec:sim_study} demonstrates how a variety of CDAs perform on DaO simulations compared to the other two simulation designs. The paper concludes with a brief discussion in Section \ref{sec:discussion}.

\subsection{Contributions}
\label{sec:contributions}

    Our novel contributions include the following:
    \begin{enumerate}
        \item We present a new method, the DAG-adaptation of the Onion (DaO) method, for simulating data from a given DAG. The DaO method avoids several known problems with standard DAG model simulation designs.
        \item We prove that the DaO method randomly samples correlation matrices uniformly over the space of all correlation matrices that satisfy the Markov property for the provided DAG.
        \item We present a new method for rewiring a DAG to make the in-degree or out-degree (or both) scale-free according to Price's model.
        \item We demonstrate advantages of the proposed study design over alternative designs.
        \item We provide a novel explanation for why the simulation design used by many continuous optimization CDA publications report performance statistics in conflict with those from other simulation designs.
    \end{enumerate}

\subsection{Previous Work}
\label{sec:previous_work}

    \cite{reisach2021beware} introduced the concept of \emph{varsortability}. They observed that in the Zheng, Aragam, Ravikuma, and Xing (ZARX) simulation \citep{zheng2018dags} (a common method for simulating data from a DAG which is described in more detail in Section \ref{sec:eval_of_params}) the causal order of the variables correlates with the marginal variance of the variables. This is problematic because the marginal variance of a variable is scale dependent, making it an arbitrary quantity that should contain no information about the DAG. However, since the ZARX simulation design produces data with varsortability, marginal variance could in principle be used to gain information about the topological order of the DAG. Many CDAs are unaffected by marginal variance and thus would not make use of this information. Some CDAs, however, appear to be responsive to marginal variance. Such algorithms would have unrealistically strong performance in simulation studies with varsortability.

    While varsortability can be removed by standardizing the data, \cite{reisach2023simple} show in a followup paper that the ZARX simulation design has another sortability property that is not affected by rescaling the data: \emph{$R^2$-sortability}. For $R^2$-sortability, the statistic of interest is, ``the fraction of a variable's variance explained by all others, as captured by the coefficient of determination $R^2$.'' Many simulation designs produce data such that a variable's ranking based on $R^2$ values is associated with its ranking in the DAG's topological order. Whether real world causal systems also exhibit $R^2$-sortability is an empirical question, however it is certainly plausible that common simulation designs have significantly more $R^2$-sortability than is realistic or reasonable.

    These sortability concepts have also been discussed in relation to a recently developed class of DAG-learning algorithms that use continuous optimization to solve a differentiable loss function. In the first paper to actively criticize continuous optimization for learning DAGs, \cite{kaiser2022unsuitability} focus on the Non-combinatorial Optimization via Trace Exponential and Augmented lagRangian for Structure learning (NOTEARS) method \citep{zheng2018dags}. They demonstrate through examples and theory that NOTEARS is not scale-invariant. In other words, the output of NOTEARS depends on the units of measurement for the variables in the dataset. This suggests that NOTEARS makes use of varsortability to achieve good performance. They also show that when the simulation design removes or reverses the varsortability, the performance of those methods can drop dramatically.

    \cite{ng2024structure} also investigated how the apparent performance of continuous optimization DAG-learning algorithms depends on simulation design. While they are critical of methods such as NOTEARS, they point to different issues than those raised by Kaiser and Reisach. In terms of simulation design, they indicate the use of equal versus non-equal variances in the simulated independent error terms as being responsible for the unusual simulation results \citep{peters2014identifiability}.

    These previous works highlight a general problem with prior simulation designs for evaluating DAG-learning algorithms. By producing data distributions that associate irrelevant statistics with the DAG's structure, these designs can artificially skew results. In this paper, we present the DAG-adaptation of the Onion (DaO) method, which avoids these issues without introducing tuning parameters that might induce other problems. The DaO method accomplishes this by uniformly sampling the space of correlation matrices consistent with (i.e. that satisfy the Markov property with respect to) a DAG.

\subsection{Why Simulate from DAGs by Sampling Correlation Matrices Uniformly?}
\label{sec:why_uniform}

    The benefits of the DaO method fall into two categories: (1) advantages of directly sampling from the space of correlation matrices; (2) advantages of sampling uniformly at random. Throughout this section, it is assumed that we wish to simulate data from some known DAG.

\subsubsection{Advantages of Directly Sampling the Correlation Matrix}

    
    Existing linear simulation designs almost exclusively sample from a known distribution of edge weights and independent noise terms, which in turn imply a (typically unknown or uncharacterized) distribution of correlation matrices. The DaO method takes the opposite approach: we prioritize sampling from a known distribution over the space of correlation matrices, and the sampled matrix implies a standardized set of edge weights and independent noise terms. This has some advantages, including the following.
    
    First, a correlation matrix and sample size are sufficient input for many CDAs. As such, for a given DAG, it makes sense to consider how CDAs perform across the space of possible (i.e. Markov) correlation matrices. By prioritizing our control over the distribution of correlation matrices, we can directly interrogate performance across the space of possible CDA inputs.
    
    Second, directly sampling correlation matrices will immediately and trivially prevent simulation artifacts like varsortability. A correlation matrix characterizes a standardized distribution where all variables have equal variance, making varsortability impossible.
    
    %
    
    Third, the parametric models produced by directly sampling a correlation matrix are inherently standardized. Almost all other simulations sample edge weights and independent noise terms in a way that does not produce a standardized model. This can lead to situations where effect sizes (in terms of signal versus noise) are much stronger or weaker than one might expect, potentially even creating near-deterministic relationships; see Figures \ref{fig:er_sims}, \ref{fig:sfi_sims}, and \ref{fig:sfo_sims}. It is possible to directly sample standardized edge weights and independent noise terms, however the only existing method we are aware of that meets this criterion has several restrictions which the DaO method does not have \citep{kummerfeld2023power}. For example, that method requires all edges in a model to have the same strength---DaO simulations have no such restrictions.

\subsubsection{Advantages of Sampling Uniformly}

    
    After establishing a paradigm of directly sampling correlation matrices, it is intuitive to sample uniformly from this space. Uniform sampling across correlation matrices also has several benefits, including the following.
    
    First, sampling from the uniform distribution of correlation matrices means there are no simulation parameters or other design choices. This is convenient, prevents users from cherry-picking simulation parameters, and ensures that simulations from different studies are consistent.
    
    Second, uniform sampling across all correlation matrices means that no correlation matrix is missed. All previous simulation designs that we are aware of omit substantial portions of the possible correlation matrices, and as such provide no findings about how CDAs perform on parts of the problem space. This could also be accomplished by another distribution with total support.
    
    Third, for assessment of learning tasks it is common to assume a generic condition where we lack knowledge about what to expect. When evaluating CDAs in a generic context (i.e. where we do not have a precise topic that the CDA is intended for), it is hard to say that having good performance on any particular correlation matrix is more important than another. This lack of knowledge in a generic context is accurately captured by a uniform distribution.
    
    Fourth, sampling uniformly from the space of all correlation matrices is fair and embodies a valuable benchmark. The CDAs that perform best in these simulations will be those that have the best average performance across all inputs. This is a desirable property for any general purpose CDA to have, and is a fair test that puts all CDAs on an equal playing field.

\section{Background}
\label{sec:background}

    This section provides background on the relevant notation, models, and theory. The topics covered include directed acyclic graphs (DAGs), structural equation models (SEMs), the original Onion method, the multivariate Pearson type II distribution, and the Erd\H{o}s R{\'e}nyi (ER) method for generating DAGs.

\subsection{Directed Acyclic Graphs}
\label{sec:dags}

    A \textit{directed acyclic graph} (DAG) is an ordered pair $\mc G = (V, E)$ consisting of a finite vertex set $V$ and an edge set of ordered pairs $E \sube V \times V$ with no \textit{directed cycles}.\footnote{A \textit{directed cycle} is a vertex sequence $\langle v_1, \dots, v_k \rangle$ $(k > 1)$ where $v_1 = v_k$ and $(v_i, v_{i+1}) \in E$ for all $i < k$.} The vertices connected to vertex $i \in V$ by an edge are described as \textit{parents} and \textit{children} respectively:
    \[
        \pa{\mc G}{i} \eq \{ j \in V \; : \; (i, j) \in E \}
        \s[72]
        \ch{\mc G}{i} \eq \{ j \in V \; : \; (j, i) \in E \}.
    \]
    We relate $\mc G$ to a \textit{total order} $\prec$ on $V$ using two conditions.\footnote{A \textit{total order} is a binary relation on $V$ that is reflexive, transitive, antisymmetric, and strongly connected.}
    \begin{itemize}
        \item The total order $\prec$ is \textit{consistent} if for all $i, j \in V$: 
        \[
            j \in \pa{\mc G}{i}  \s[18] \Rightarrow \s[18]  j \prec i.
        \]
        \item The total order $\prec$ \textit{source-first} if for all $i, j \in V$:
        \[
            |\pa{\mc G}{i}| \neq 0 \s[12] \tu{and} \s[12] |\pa{\mc G}{j}| = 0 \s[18] \Rightarrow \s[18] j \prec i.
        \]
    \end{itemize}
    
    
    
    
    
    



    \noindent Let $[ \, p \, ] = \{ i \in \mbb N \; : \; 1 \leq i \leq p \}$ and note that $\prec$ induces a bijection $\pi: [ \, p \, ] \rightarrow V$ between the first $p$ natural numbers and the elements of $V$. The following convention will be used throughout the paper.
    \begin{conv}
        \label{conv:1}
        Abusing notation, we use $i \in [ \, p \, ]$ to denote $\pi(i)$ and $i \in V$ to denote $\pi^{-1}(i)$. For any graph with vertex set $V$ let $\prec$ denote a source-first consistent order. This can be done on a graph by graph basis and therefore does not put any restriction on the graphs we consider. Using the established abuse of notation, this implies $1 \prec \dots \prec p$ is a source-first consistent order and $i \prec j$ if and only if $i < j$ for all $i, j \in V$. 
    \end{conv}

\subsection{DAG Models}
\label{sec:dag_models}

    DAG models are probabilistic models whose conditional independence relationships are explicitly represented by a DAG. In particular, DAGs graphically encode conditional independence relationships between their vertices which can be read off using $d$-separation \citep{pearl1988probabilistic, verma1990causal}. Markov properties define subsets of these relationships---usually by a simple graphical criterion---that are sufficient to imply the full set of conditional independence relationships encoded by the DAG.
    
    The \textit{ordered Markov property} does so using a consistent total order \citep{lauritzen1990independence}.\footnote{\cite{lauritzen1990independence} originally called this Markov property the well-numbering Markov property.} Accepting Convention \ref{conv:1}, a probability distribution over $V$ satisfies the ordered Markov property with respect to a DAG $\mc G = (V, E)$ if and only if:
    \[
        i \ind [ \, i - 1 \, ] \sm \pa{\mc G}{i} \mid \pa{\mc G}{i} \s[12] \tu{for all} \s[12] i \in V
    \]
    where the ternary relation $A \ind B \mid C$ for disjoint sets $A, B, C \sube V$ denotes that $A$ and $B$ are independent conditioned on $C$ \citep{dawid1979conditional}. 

    When a distribution satisfies the ordered Markov property with to respect $\mc G$, we say the distribution and its corresponding parametric model are \textit{Markov to} $\mc G$. The Faithfulness condition compliments the ordered Markov property---it requires that the full set of conditional dependence (rather than independence) relationships encoded by the DAG are contained in the model. When a distribution contains the conditional dependence relationships encoded by $\mc G$, we say the distribution and its corresponding parametric model are \textit{faithful to} $\mc G$.
    
    Recursive structural equation models (SEMs) characterize a family of DAG models parameterized (primarily) by covariance/correlation.\footnote{General SEMs can model latent variables and feedback. These models can have more complex independence structures not compatible with DAGs \citep{bollen1989structural, drton2018algebraic}.} In these models, conditional independence is equivalent to partial correlation.






\subsection{Recursive Structural Equation Models}
\label{sec:rsems}

    A \textit{recursive SEM} is a system of equations with no feedback loops or correlation error terms \citep{bollen1989structural}. Let $X = \{ X_i \; : \; i \in V \}$ and $Z = \{ Z_i \; : \; i \in V \}$ be collections of random variables indexed by $V$ where $X$ denotes model variables and $Z$ denotes error terms. $Z$ is distributed according to $\mc E(\Omega)$ where $\mc E$ is a distribution with finite covariance represent by matrix $\Omega$.
    
    Let $\mbb R^{V \times V}$ be the space of $|V| \times |V|$ dimensional real-valued matrices, and let $\mbb S^{V}_{++}$ be the space of $|V| \times |V|$ dimensional positive definite matrices. Accepting Convention \ref{conv:1}, a recursive SEM for $\mc G$ is a distribution characterized by the system of equations:

    \begin{equation}
        \label{eq:sem}
        X = X B^\top + Z \s[72] Z \sim \mc E(\Omega)
    \end{equation}
    where
    \begin{align}
        \label{eq:bspace} B &\in \{ (\beta_{i,j}) \in \mbb R^{V \times V} \; : \; \beta_{i,j} = 0 \s[8] \tu{if} \s[8] j \not \in \pa{\mc G}{i} \}; \\[2mm]
        \label{eq:ospace} \Omega &\in \{ (\omega_{i,j}) \in \mbb S^{V}_{++} \; : \; \omega_{i,j} = 0 \s[8] \tu{if} \s[8] i \neq j \}.
    \end{align}
    
    Let $I$ denote the $p \times p$ identity matrix. $I - B$ is lower triangular and invertible---the linear system in Equation \ref{eq:sem} is solved uniquely by $X = Z \, (I - B)^{-\top}$ and has covariance matrix:
    \begin{equation}
        \label{eq:sigma}
        \Sigma = (I - B)^{-\top} \, \Omega \, (I - B)^{-1}.
    \end{equation}

    \noindent In order to establish the connection between recursive SEMs and DAGs, let $J \sube [ \, i - 1 \, ]$ such that $\pa{\mc G}{i} \sube J$:
    \begin{equation}
        \label{eq:beta}
        B_{i,J} = \Sigma_{i,J} \, (\Sigma_{J,J})^{-1}
    \end{equation}
    \begin{equation}
        \label{eq:omega}
        \Omega_{i,i} = \Sigma_{i,i} - \Sigma_{i,J} \, (\Sigma_{J,J})^{-1} \, \Sigma_{J,i}
    \end{equation}
    $J$ is usually defined as $\pa{\mc G}{i}$---the extension we give here is implied by the ordered Markov property.

\subsection{The Onion Method}
\label{sec:onion_method}


    \cite{ghosh2003behavior, ghosh2009corrigendum} proposed the Onion method for uniformly sampling correlation matrices. In other words, the Onion method characterizes the distribution $R \sim f : \{ (\sigma_{i,j}) \in \mbb S^{V}_{++} \; : \; \sigma_{i,j} = 1 \s[8] \tu{if} \s[8] i = j \} \ta \mbb R$ such that:
    \[
        f(R) \propto 1.
    \]
    Their approach sets up a recurrence relation where a correlation matrix $R_{i+1}$ is sampled by appending a vector $r_{i+1}$ to the rows and columns of an existing correlation matrix $R_i$:
    \[
    R_{i+1} =
    \begin{bmatrix}
        R_{i} & r_{i+1} \\
        r_{i+1}^\top & 1
    \end{bmatrix}
    \]


    Ghosh and Henderson call $r_{i+1}$ the completion of $R_{i}$ in $R_{i+1}$ and provide a necessary and sufficient condition on $r_{i+1}$ for $R_{i+1}$ to be positive definite---we use this condition as well. Lemma 4.3 from \cite{ghosh2003behavior}:
    \begin{lemma}
        \label{lem:r_pd}
        If $R_i$ is positive definite, then $R_{i+1}$ is positive definite if and only if:
        \[
            1 - r_{i+1}^\top R_i^{-1} r_{i+1} > 0.
        \]
    \end{lemma}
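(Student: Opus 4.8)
The plan is to recognize this as the standard Schur-complement characterization of positive definiteness for a symmetric block matrix and to reduce $R_{i+1}$ to block-diagonal form by a congruence. Since $R_i$ is positive definite it is invertible, so I would form the unit lower-triangular matrix $L = \bigl[\begin{smallmatrix} I & 0 \\ r_{i+1}^\top R_i^{-1} & 1 \end{smallmatrix}\bigr]$ and verify by direct block multiplication --- using that $R_i$, and hence $R_i^{-1}$, is symmetric so that the off-diagonal blocks cancel --- that
\[
    R_{i+1} = L \begin{bmatrix} R_i & 0 \\ 0 & s \end{bmatrix} L^\top, \qquad s := 1 - r_{i+1}^\top R_i^{-1} r_{i+1}.
\]
Because $\det L = 1$, the matrix $L$ is invertible, so this exhibits $R_{i+1}$ and $\mathrm{diag}(R_i, s)$ as congruent symmetric matrices.

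Next I would invoke Sylvester's law of inertia: congruent symmetric matrices have the same number of positive, negative, and zero eigenvalues. Hence $R_{i+1}$ is positive definite if and only if $\mathrm{diag}(R_i, s)$ is, and a symmetric block-diagonal matrix is positive definite exactly when each diagonal block is. Since $R_i$ is positive definite by hypothesis, the condition collapses to $s > 0$, which is precisely the claimed inequality $1 - r_{i+1}^\top R_i^{-1} r_{i+1} > 0$.

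A more self-contained route, which I could use instead of citing Sylvester's law, argues directly with the quadratic form: writing a generic vector as $(y^\top, t)^\top$, expand $(y^\top, t)\, R_{i+1} (y^\top, t)^\top = y^\top R_i y + 2 t\, r_{i+1}^\top y + t^2$, and for fixed $t$ minimize over $y$. The minimizer $y = -t R_i^{-1} r_{i+1}$ is well defined since $R_i$ is positive definite, and the minimum value is $t^2 s$. Then positive definiteness of $R_{i+1}$ forces $s > 0$ (take $t = 1$ and the corresponding $y$), while conversely $s > 0$ together with positive definiteness of $R_i$ (which governs the slice $t = 0$) gives strict positivity on every nonzero $(y^\top, t)^\top$.

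I do not expect a genuine obstacle here --- the identity is routine linear algebra. The only points requiring a moment's care are invoking invertibility of $R_i$ before writing $R_i^{-1}$, and, in the quadratic-form version, separately treating the degenerate slice $t = 0$, where strict positivity is supplied by $R_i$ rather than by $s$.
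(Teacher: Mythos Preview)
Your proof is correct; both the congruence/Sylvester route and the quadratic-form minimization are standard and sound derivations of the Schur-complement criterion for positive definiteness. Note, however, that the paper does not actually supply a proof of this lemma: it is quoted verbatim as Lemma~4.3 of \cite{ghosh2003behavior} and treated as an imported result, so there is no in-paper argument to compare against.
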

    
    Ghosh and Henderson named their method the Onion method because it builds up a correlation matrix one layer at a time. \cite{lewandowski2009generating} reformulated the Onion method in terms of elliptical distributions. Their reformulation relies on the multivariate Pearson type II distribution---we rely on this distribution as well.

\subsection{Multivariate Pearson Type II}
\label{sec:mpii}

    $W \sim \textit{mPII}_d(\gamma)$ follows a multivariate Pearson type II distribution if it has density:\footnote{The multivariate Pearson type II distribution is usually defined: $f(w \mid a) \propto (1 - w^\top w)^{a - 1}$. We use $\gamma = a - \frac{1}{2}$ to simplify calculations in Section \ref{sec:go2dag}.}
    \begin{equation}
        \label{eq:mpii}
        f(w \mid \gamma) \propto (1 - w^\top w)^{\gamma - \frac{1}{2}}
    \end{equation}
    where $\gamma > -\frac{1}{2}$ and $w \in \mbb R^d$ such that $w^\top w < 1$. Alternatively, $W = Q^\frac{1}{2} U$ where:
    \begin{align*}
        Q &\sim \textit{beta}(\tfrac{d}{2}, \gamma + \tfrac{1}{2}); \\
        U &\sim \textit{uniform}(\{ u \in \mbb R^d \; : \; \Vert u \Vert_2 = 1 \}).
    \end{align*}
    In the alternative formulation, $Q$ is distributed uniformly on the surface of the unit $d$-sphere. For more details see \cite{fang1990symmetric, khalafi2014multivariate}.
    

\subsection{Sampling Random DAGs}
\label{sec:sampling_dags}

    We consider two approaches for generating graphs: Erd\H{o}s R{\'e}nyi (ER) where edges occur with equal probability \citep{erdos59random} and Scale-Free (SF) where edges occur such that the distribution of vertex degrees follows a power-law \citep{barabasi1999emergence}. These methods produce undirected graphs rather than DAGs, but are frequently adapted to generate DAGs by imposing a topological order on the vertices---we also take this approach. Algorithm \ref{alg:er_dag} contains pseudo-code for sampling ER-DAGs using the total order established by Convention \ref{conv:1}.

    \begin{algorithm}[H]
    \DontPrintSemicolon
    \setstretch{1.1}
    \caption{$\mt{ER\tu{-}DAG}(V, \alpha)$}
    \label{alg:er_dag}
    \KwIn{$\mt{vertices}: V \s[24] \mt{avg\tu{-}deg}: \alpha$}
    \KwOut{$\mt{DAG}: \mc G = (V, E)$}
    $E \at \es$ \;
    \Repeat{$|E| = \frac{\alpha}{2} \s[3] |V|$}{
        $(i, j) \sim f : V \s[-3] \times \s[-1] V \ta \mbb R \s[12] \tu{s.t.} \s[12] f \propto 1_{j < i}$ \;
        $E \at E \cup \{ (i, j) \}$ \;
    }
\end{algorithm}

    In Algorithm \ref{alg:er_dag}, $1_{j<i}$ denotes the indicator function for $j < i$. Algorithm \ref{alg:er_dag} ($\mt{ER\tu{-}DAG}$) takes vertex set $V$ and average degree $\alpha$ as input to construct a DAG. The total order over $V$ established by Convention \ref{conv:1} is imposed on the vertices to direct the edges. Edges consistent with the imposed order are sampled uniformly at random until an average degree of $\alpha$ is achieved. Since the operation of some causal discovery algorithms (CDAs) can depend on the order of the variables, the vertex order of a DAG can (and perhaps should) be randomized after sampling.
    
    Notably, $\mt{ER\tu{-}DAG}$ is biased and does not produce DAGs uniformly at random. Table \ref{tab:er_dags} gives a simple counterexample. Each cell in Table \ref{tab:er_dags} shows a DAG with three vertices and two edges. The rows and columns enumerate the possible undirected graphs and topological orders, respectively. Assuming the topological order is chosen arbitrarily, sampling a 3-vertex 2-edge DAG according to $\mt{ER\tu{-}DAG}$ is equivalent to selecting a row and column uniformly at random. Some DAGs occur only once in this table, however, others occur in two cells. As such, some DAGs (the darkly-shaded cells) will be sampled twice as often as others (the lightly-shaded cells).


    \begin{table}[ht!]
    \centering
    \begin{tabular}{ccccccc}
        \toprule
        & $1 < 2 < 3$ & $1 < 3 < 2$ & $2 < 1 < 3$ & $2 < 3 < 1$ & $3 < 1 < 2$ & $3 < 2 < 1$ \\
        \cmidrule(lr){2-2} \cmidrule(lr){3-3} \cmidrule(lr){4-4} \cmidrule(lr){5-5} \cmidrule(lr){6-6} \cmidrule(lr){7-7}
        \includegraphics[page=1, scale=0.8]{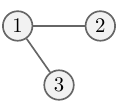} & \cellcolor{gray!20} \includegraphics[page=2, scale=0.8]{images/figs.pdf} & \cellcolor{gray!20} \includegraphics[page=2, scale=0.8]{images/figs.pdf} & \cellcolor{gray!10} \includegraphics[page=3, scale=0.8]{images/figs.pdf} & \cellcolor{gray!20} \includegraphics[page=4, scale=0.8]{images/figs.pdf} & \cellcolor{gray!10} \includegraphics[page=5, scale=0.8]{images/figs.pdf} & \cellcolor{gray!20} \includegraphics[page=4, scale=0.8]{images/figs.pdf} \\
        \includegraphics[page=6, scale=0.8]{images/figs.pdf} & \cellcolor{gray!10} \includegraphics[page=7, scale=0.8]{images/figs.pdf} & \cellcolor{gray!20} \includegraphics[page=10, scale=0.8]{images/figs.pdf} & \cellcolor{gray!20} \includegraphics[page=8, scale=0.8]{images/figs.pdf} & \cellcolor{gray!20} \includegraphics[page=8, scale=0.8]{images/figs.pdf} & \cellcolor{gray!20} \includegraphics[page=10, scale=0.8]{images/figs.pdf} & \cellcolor{gray!10} \includegraphics[page=9, scale=0.8]{images/figs.pdf} \\
        \includegraphics[page=11, scale=0.8]{images/figs.pdf} & \cellcolor{gray!20} \includegraphics[page=12, scale=0.8]{images/figs.pdf} & \cellcolor{gray!10} \includegraphics[page=15, scale=0.8]{images/figs.pdf} & \cellcolor{gray!20} \includegraphics[page=12, scale=0.8]{images/figs.pdf} & \cellcolor{gray!10} \includegraphics[page=13, scale=0.8]{images/figs.pdf} & \cellcolor{gray!20} \includegraphics[page=14, scale=0.8]{images/figs.pdf} & \cellcolor{gray!20} \includegraphics[page=14, scale=0.8]{images/figs.pdf} \\
        \bottomrule
    \end{tabular}
    \caption{The typical method for sampling ER-DAGs is not uniform.}
    \label{tab:er_dags}
\end{table}

    Alternatively, \cite{melanccon2001random} proposed a method to sample DAGs uniformly at random. However, their procedure relies on MCMC sampling and is not scalable to hundreds of variables. Moreover, their approach can only constrain the sample space by specifying an upper bound on the number of edges, rather than something more precise such as an exact average degree. 
    
    SF-DAGs are commonly generated following \cite{price1976general}. Many studies use the $\mt{igraph}$ C library \citep{igraph} which by default generates SF-DAGs with scale-free in-degree and constant vertex out-degree (equal to $\frac{\alpha}{2}$). This approach can generate SF-DAGs with scale-free out-degree and constant vertex in-degree by reversing the direction of the edges in the graph. However, it is not straightforward to add variation to the non-scale-free vertex degree or to generate DAGs that have scale-free in-degree and out-degree at the same time.

\section{Methods}
\label{sec:methods}

    This section presents the primary content of this paper. First, we present methods for randomly rewiring the edges of DAGs to have scale-free in-degree or out-degree (or both). We then present the DaO method for sampling correlation matrices that are Markov to a given DAG. This includes theoretical results that the distribution of matrices produced by the DaO method uniformly samples the space of all correlation matrices Markov to the input DAG.

\subsection{Scale-free Randomly Rewiring DAGs}
\label{sec:rewiring_dags}

    Unlike the DAG sampling method described in Section \ref{sec:sampling_dags}, the methods we present here are for modifying an existing DAG in a way that retains some properties, such as average degree and number of vertices, while changing other properties, such as having a power-law distribution for the vertex in-degree or out-degree.

    



    \noindent
    \begin{minipage}{0.48\textwidth} \begin{algorithm}[H]
    \DontPrintSemicolon
    \setstretch{1.1}
    \caption{$\mt{SFi\tu{-}DAG}(\mc H)$}
    \label{alg:sfi_dag}
    \KwIn{$\mt{DAG}: \mc H = (V, \, \cdot \,) \; \tu{s.t.} \; |V| = p$}
    \KwOut{$\mt{DAG}: \mc G = (V, E)$}
    $E \at \es$ \;
    \For{$i \at p$ \KwTo $1$}{
        \Repeat{$|\ch{\mc G}{i}| = |\ch{\mc H}{i}|$}{ 
            $j \sim f : V \ta \mbb R \s[12] \tu{s.t.}$ $f \propto 1_{i \, < \, j} + |\pa{\mc G}{j}|$ \;
            $E \at E \cup \{ (j, i) \}$ \;
        }
    }
\end{algorithm} \end{minipage}
    \begin{minipage}{0.01\textwidth} \hfill \end{minipage}
    \begin{minipage}{0.48\textwidth} \begin{algorithm}[H]
    \DontPrintSemicolon
    \setstretch{1.1}
    \caption{$\mt{SFo\tu{-}DAG}(\mc H)$}
    \label{alg:sfo_dag}
    \KwIn{$\mt{DAG}: \mc H = (V, \, \cdot \,) \; \tu{s.t.} \; |V| = p$}
    \KwOut{$\mt{DAG}: \mc G = (V, E)$}
    $E \at \es$ \;
    \For{$i \at 1$ \KwTo $p$}{
        \Repeat{$|\pa{\mc G}{i}| = |\pa{\mc H}{i}|$}{ 
            $j \sim f : V \ta \mbb R \s[12] \tu{s.t.}$ $f \propto 1_{j \, < \, i} + |\ch{\mc G}{j}|$ \;
            $E \at E \cup \{ (i, j) \}$ \;
        }
    }
\end{algorithm} \end{minipage}

    In Algorithms \ref{alg:sfi_dag}, and \ref{alg:sfo_dag}: $1_{i<j}$ and $1_{j<i}$ denote the indicator function for $i < j$ and $j < i$ respectively. Algorithm \ref{alg:sfi_dag} ($\mt{SFi\tu{-}DAG}$) resamples the parents of each vertex via preferential attachment in reverse topological order resulting in a scale-free in-degree distribution while maintaining the prior out-degree distribution. Conversely, Algorithm \ref{alg:sfo_dag} ($\mt{SFo\tu{-}DAG}$) resamples the children of each vertex via preferential attachment in topological order resulting in a scale-free out-degree distribution while maintaining the prior in-degree distribution. 
    
    These algorithms allow one to provide a DAG that has some desirable properties, such as number of edges or in/out-degree distribution, and modify it so that either the in-degree or the out-degree is scale-free. Additionally, both procedures may be applied to produce a graph with both scale-free in-degree and out-degree.\footnote{Applying $\mt{SFi\tu{-}DAG}$ before $\mt{SFo\tu{-}DAG}$ versus $\mt{SFo\tu{-}DAG}$ before $\mt{SFi\tu{-}DAG}$ results in different in/out-degree distributions. Their application may be iterated in an alternating manner if identical in/out-degree distributions are desired.} Figures \ref{fig:degs} and \ref{fig:adj_mats} illustrate in/out-degree distributions for DAGs generated by $\mt{SFi\tu{-}DAG}$ and $\mt{SFo\tu{-}DAG}$ compared $\mt{ER\tu{-}DAG}$.
    
    \begin{figure}[ht]
        \centering
        \includegraphics[scale=0.6]{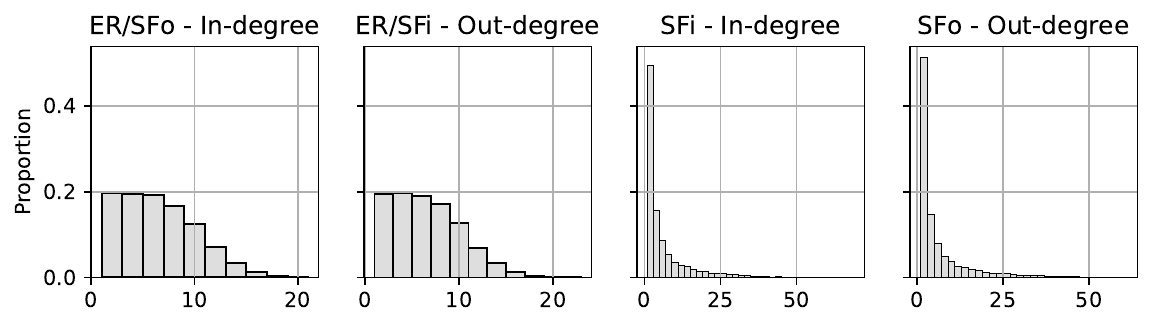}
        \caption{Vertex in/out-degree distributions for 100 DAGs with $|V| = 100$ and $\alpha = 10$.}
        \label{fig:degs}
    \end{figure}

    \begin{figure}[ht]
        \centering
        \begin{subfigure}[b]{0.3\textwidth}
            \raggedleft
            \includegraphics[width=0.5\textwidth]{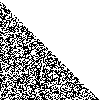}
        \end{subfigure}
        \hfill
        \begin{subfigure}[b]{0.3\textwidth}
            \centering
            \includegraphics[width=0.5\textwidth]{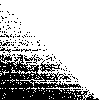}
        \end{subfigure}
        \hfill
        \begin{subfigure}[b]{0.3\textwidth}
            \raggedright
            \includegraphics[width=0.5\textwidth]{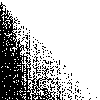}
        \end{subfigure}
        \caption{Edge matrices for ER/SFi/SFo-DAGs with $|V| = 100$ and $\alpha = \frac{99}{2}$ (density $\frac{1}{2}$).}
        \label{fig:adj_mats}
    \end{figure}

\subsection{Generalizing the Onion Method to DAGs}
\label{sec:go2dag}

    We first present an intuition and the theory for the DAG-adaptiation of the Onion (DaO) method. The main result is proven by Theorem \ref{thm:dao}. Pseudocode for the DaO method is provided in Section \ref{sec:dao}.

    To create the DaO method, we adapt the Onion method to sample uniformly from the space of correlation matrices Markov to a given DAG. In addition to sampling the correlation matrix, we keep track of the corresponding $B$ and $\Omega$ matrices, which we sample in tandem with the correlation matrix. 
    
    Let $\mc G = (V, E)$ be a DAG and accept Convention \ref{conv:1} so that $V = 1, \dots, p$ and $1 < \dots < p$ is a source first consistent order. Starting from the $m$ dimensional identity matrix where $m = | \{ i \in V \; : \; \pa{\mc G}{i} = \es \} |$, the DAG-adaptation of the onion (DaO) builds up a correlation matrix (and corresponding $B$ and $\Omega$) by adding one layer at a time:
    \[
        R_{i+1} =
        \begin{bmatrix}
            R_{i} & r_{i+1} \\
            r_{i+1}^\top & 1
        \end{bmatrix}
        \s[72]
        B_{i+1} =
        \begin{bmatrix}
            B_{i} & 0_i \\
            b_{i+1}^\top & 0
        \end{bmatrix}
        \s[72]
        \Omega_{i+1} =
        \begin{bmatrix}
            \Omega_{i} & 0_i \\
            0_i^\top & \omega_{i+1}
        \end{bmatrix}
    \]
    where $0_i$ denotes a column vector of $i$ zeros. Intuitively, the non-parents of $i+1$ correspond to zero entries in $b^\top_{i+1}$. Following Ghosh and Henderson, we call $b_{i+1}^\top$ the completion of $B_i$ in $B_{i+1}$ and $\omega_{i+1}$ the completion of $\Omega_i$ in $\Omega_{i+1}$. By Equations \ref{eq:beta} and \ref{eq:omega}:
    \begin{align}
        b_{i+1}^\top &= r_{i+1}^\top R_i^{-1} \label{eq:beta_comp} \\
        \omega_{i+1} &= 1 - r_{i+1}^\top R_i^{-1} r_{i+1} \label{eq:omega_comp}
    \end{align}

    \noindent Define $\Phi(\mc G, R_i)$ as the space of (potentially invalid) completions of $R_{i}$ in $R_{i+1}$ where the indices $b^\top_{i+1}$ that do not correspond to parents of $i+1$ are zero. By Equation \ref{eq:beta_comp}:
    \[
        \Phi(\mc G, R_i) \eq \{ r_{i+1} \in \mbb R^i \; : \; (b_{i+1}^\top)_j = (r_{i+1}^\top R_i^{-1})_j = 0 \s[8] \tu{if} \s[8] j \not \in \pa{\mc G}{i+1} \}
    \]
    Define $\Psi(R_i)$ as the space of valid completions of $R_{i}$ in $R_{i+1}$. By Equation \ref{eq:omega_comp} and Lemma \ref{lem:r_pd}:
    \[
        \Psi(R_i) \eq \{ r_{i+1} \in \mbb R^i \; : \; \omega_{i+1} = 1 - r_{i+1}^\top R_i^{-1} r_{i+1} > 0 \}
    \]
    The DaO method iteratively sample $r_{i+1} \in \Phi(\mc G, R_i) \cap \Psi(R_i)$ to construct a correlation matrix $R$ layer by layer, thereby sampling a correlation matrix Markov to the given DAG.

    \vskip 5mm
    
    We temporarily drop subscripts to simplify notation and let $R = R_i$, $r = r_{i+1}$, $b = b_{i+1}$, $\omega = \omega_{i+1}$, and $k = |\pa{\mc G}{i+1}|$. Additionally:
    \begin{itemize}
        \item Let $P$ be a $i \times i$ permutation matrix where $P^\top R P$ permutes the rows and columns of $R$ such that rows and columns corresponding to parents of $i+1$ come first.
        \item Permutation matrices are orthogonal: $P^\top = P^{-1}$ so we can revert a permutation by multiplying by the transpose. The determinant of a permutation matrix has an absolute value of one: $|\det(P)| = 1$.
        \item Let $L L^\top = P^\top R P$ be the Cholesky decomposition so that $R = P L L^\top P^\top$.
        \item $L$ is an $i \times i$ lower triangular matrix with real and positive diagonal elements. $L^{-1}$ exists and is an $i \times i$ lower triangular matrix with real and positive diagonal elements.
        \item Let $w$ be a point in the unit $i$-sphere: $w \in R^i$ such that $w^\top w < 1$ 
    \end{itemize}

    
    \noindent Let $w = \begin{bmatrix} w_1 \\ w_2 \end{bmatrix}$ where $w_1 \sim \textit{mPII}_k$ and $w_2 = 0_{(i-k)}$. Noting that $(1 - w_1^\top w_1) = (1 - w^\top w)$, we can express the density of $w$ in the same form as Equation \ref{eq:mpii}: $f(w \mid \gamma) \propto (1 - w^\top w)^{\gamma - \frac{1}{2}}$. If we apply the change of variables $w = (P L)^{-1} r$, then $r$ has density:
    \begin{equation}
    \begin{split}
        f(r \mid R, \, \gamma) &\propto \left[1 - ((P L)^{-1} r)^\top ((P L)^{-1} r) \right]^{\gamma - \frac{1}{2}} \; \left|\det((P L)^{-1}) \right| \\
        &= \left[1 - (L^{-1} P^\top \, r)^\top L^{-1} P^\top \, r \right]^{\gamma - \frac{1}{2}} \; \left|\det(P L)^{-1} \right| \\
        &= \left[1 - r^\top P L^{-\top} L^{-1} P^\top \, r \right]^{\gamma - \frac{1}{2}} \; \left[\det(P L) \det(L^\top P^\top) \right]^{-\frac{1}{2}} \\
        &= \left[1 - r^\top (P L L^{\top} P^\top)^{-1} r \right]^{\gamma - \frac{1}{2}} \; \det(P L L^\top P^\top)^{-\frac{1}{2}} \\
        &= (1 - r^\top R^{-1} r)^{\gamma - \frac{1}{2}} \; \det(R)^{-\frac{1}{2}} \label{eq:r|R}
    \end{split}
    \end{equation}

    The change of variables: $r = P L w$ and $w = (P L)^{-1} r$ is central to the DaO method. First we prove the linear transformation used in this change of variables is a bijection.  

    \begin{lemma}
        \label{lem:biject}
        If $R = PLL^\top P^\top$ is positive definite, then the linear transformation characterized by $PL$ is a bijection.
    \end{lemma}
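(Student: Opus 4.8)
The plan is to argue that $PL$ is invertible and then invoke the standard fact that an invertible linear map on the finite-dimensional space $\mbb R^i$ is automatically a bijection. Since $PL$ is a square matrix, it suffices to show $\det(PL) \neq 0$, and since $\det(PL) = \det(P)\det(L)$, it suffices to treat the two factors separately.

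First I would handle $P$: it is an $i \times i$ permutation matrix, so as already noted in the setup it is orthogonal with $|\det(P)| = 1$; in particular $\det(P) \neq 0$. Next I would handle $L$. The key intermediate claim is that $P^\top R P$ is positive definite: for any nonzero $x \in \mbb R^i$, writing $y = Px$ (which is nonzero because $P$ is invertible) gives $x^\top (P^\top R P) x = y^\top R y > 0$ by positive definiteness of $R$. Hence $P^\top R P$ admits a Cholesky decomposition $L L^\top$ in which $L$ is lower triangular with strictly positive diagonal entries; this is exactly the property recorded in the bulleted setup. Therefore $\det(L) = \prod_{j} L_{jj} > 0$, so $L$ is invertible.

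Combining these, $\det(PL) = \det(P)\det(L) \neq 0$, so $PL$ is invertible, and the linear transformation $r \mapsto (PL)^{-1} r$ is its two-sided inverse, witnessing that $PL$ defines a bijection of $\mbb R^i$ onto itself. I do not anticipate a genuine obstacle here; the only point requiring care is justifying that the Cholesky factor $L$ has nonzero determinant, which reduces to the positivity of its diagonal, and that in turn hinges on $P^\top R P$ inheriting positive definiteness from $R$ via the congruence by the invertible matrix $P$.
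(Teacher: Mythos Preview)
Your argument is correct and shares the same skeleton as the paper's---show $\det(PL)\neq 0$, hence $PL$ is invertible, hence a bijection---but the paper reaches the nonvanishing determinant in a single stroke rather than by splitting into factors. It observes that positive definiteness of $R$ gives $\det(R)\neq 0$, and then computes $\det(R)=\det(PLL^\top P^\top)=\det(PL)^2$, so $\det(PL)\neq 0$ immediately. Your route instead factors $\det(PL)=\det(P)\det(L)$ and handles each piece, which obliges you to argue separately that $P^\top R P$ is positive definite in order to secure the strictly positive diagonal of $L$. Both are fine; the paper's version is shorter and uses the hypothesis $R=PLL^\top P^\top$ more directly, while yours makes the structure of $P$ and $L$ explicit and would still work even if one only knew the individual properties of $P$ and $L$ rather than the assembled factorization of $R$.
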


    \begin{proof}
        If $R$ is positive definite, then $\det(R) \neq 0$. Furthermore, $\det(R) = \det(PLL^\top P^\top) = \det(PL)^2$ so $\det(PL) \neq 0$. Accordingly, $PL$ is full rank and therefore a bijection.
    \end{proof}

    \noindent The completions defined in Equations \ref{eq:beta_comp} and \ref{eq:omega_comp} can be reformulated in terms of $w$:
    \begin{align}
       \begin{split}
            b^\top &= \s[8] r^\top R^{-1} \s[8] = \s[8] (P L w)^\top (P L L^\top P^\top)^{-1} \\
            &\s[93] = \s[8] w^\top L^\top P^\top P L^{-\top} L^{-1} P^\top \s[8] = \s[8] w^\top L^{-1} P^\top \label{eq:beta_conv}
        \end{split} \\[2mm]
        \begin{split}
            \omega &= \s[8] 1 - r^\top R^{-1} r \s[8] = \s[8] 1 - (P L w)^\top (P L L^\top P^\top)^{-1} (P L w) \\
            &\s[132] = \s[8] 1 - w^\top L^\top P^\top P L^{-\top} L^{-1} P^\top P L w \s[8] = \s[8] 1 - w^\top w \label{eq:omega_conv}
        \end{split}
    \end{align}

    \noindent Let $w = \begin{bmatrix} w_1 \\ w_2 \end{bmatrix}$ where $w_1 \sim \textit{mPII}_k$ and $w_2 = 0_{(i-k)}$. Noting that $(1 - w_1^\top w_1) = (1 - w^\top w)$, we can express the density of $w$ in the same form as Equation \ref{eq:mpii}: $f(w \mid \gamma) \propto (1 - w^\top w)^{\gamma - \frac{1}{2}}$. If we apply the change of variables $w = (P L)^{-1} r$, then $r$ has density:
    \begin{equation}
    \begin{split}
        f(r \mid R, \, \gamma) &\propto \left[1 - ((P L)^{-1} r)^\top ((P L)^{-1} r) \right]^{\gamma - \frac{1}{2}} \; \left|\det((P L)^{-1}) \right| \\
        &= \left[1 - (L^{-1} P^\top \, r)^\top L^{-1} P^\top \, r \right]^{\gamma - \frac{1}{2}} \; \left|\det(P L)^{-1} \right| \\
        &= \left[1 - r^\top P L^{-\top} L^{-1} P^\top \, r \right]^{\gamma - \frac{1}{2}} \; \left[\det(P L) \det(L^\top P^\top) \right]^{-\frac{1}{2}} \\
        &= \left[1 - r^\top (P L L^{\top} P^\top)^{-1} r \right]^{\gamma - \frac{1}{2}} \; \det(P L L^\top P^\top)^{-\frac{1}{2}} \\
        &= (1 - r^\top R^{-1} r)^{\gamma - \frac{1}{2}} \; \det(R)^{-\frac{1}{2}} \label{eq:r|R}
    \end{split}
    \end{equation}

    \vskip 5mm

    \noindent We add subscripts back and redefine $\Phi(\mc G, R_i)$ and $\Psi(R_i)$ using $w$ in Lemmas \ref{lem:dag} and \ref{lem:corr} respectively. Theorem \ref{thm:dao} uses these lemmas to prove that the DaO method samples positive definite correlation matrices that are Markov to $\mc G$.

    \begin{lemma}
        \label{lem:dag}
        $\Phi(\mc G, R_i) \eq \Phi(\mc G, P, L, i) \eq \{ PLw \; : \; w \in \mbb R^i, \s[8] w_j = 0 \s[8] \tu{for all} \s[8] j > |\pa{\mc G}{i + 1}| \}$
    \end{lemma}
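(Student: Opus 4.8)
The plan is to rewrite the defining condition of $\Phi(\mc G, R_i)$ --- that the completion $b_{i+1}^\top = r_{i+1}^\top R_i^{-1}$ is zero in every coordinate not indexed by a parent of $i+1$ --- as an equivalent condition on the transformed vector $w = (PL)^{-1} r_{i+1}$, and then read off the claimed description of the set. The change of variables $r_{i+1} = PLw$ is a bijection by Lemma \ref{lem:biject}, and Equation \ref{eq:beta_conv} already supplies the crucial identity $b_{i+1}^\top = w^\top L^{-1} P^\top$, so the whole argument reduces to understanding exactly when $w^\top L^{-1} P^\top$ has the required zero pattern.

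I would handle this in two stages. First, strip off the permutation: by construction $P$ moves the $k = |\pa{\mc G}{i+1}|$ parent rows and columns of $R_i$ to the front, so $P^\top R_i P$, and hence its Cholesky factor $L$ and $L^{-1}$, are written in this parent-first order, while right-multiplication by $P^\top$ merely re-expresses a vector back in the original vertex order. Consequently $(b_{i+1})_j = 0$ for every non-parent $j$ is equivalent to $v_\ell := (w^\top L^{-1})_\ell = 0$ for every $\ell > k$. Second, remove $L^{-1}$: since $L^{-1}$ (equivalently $L^{-\top}$, $L^\top$) is triangular with nonzero diagonal, I claim $w^\top L^{-1}$ is supported on its first $k$ coordinates if and only if $w$ is. This is a one-line triangular-solve argument in each direction --- writing $v = L^{-\top} w$ with $L^{-\top}$ upper triangular kills $v_\ell$ for $\ell > k$ once $w_a = 0$ for all $a > k$, and symmetrically $w = L^\top v$ kills $w_j$ for $j > k$ once $v_a = 0$ for all $a > k$.

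Chaining the two equivalences gives: $r_{i+1} \in \Phi(\mc G, R_i)$ if and only if $w = (PL)^{-1} r_{i+1}$ satisfies $w_j = 0$ for all $j > |\pa{\mc G}{i+1}|$. Because $w \mapsto PLw$ is a bijection, the preimage description transfers to the image description, so $\Phi(\mc G, R_i) = \{PLw : w \in \mbb R^i,\ w_j = 0 \text{ for all } j > |\pa{\mc G}{i+1}|\}$, which is precisely what we denote $\Phi(\mc G, P, L, i)$.

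I expect the only place needing care is the permutation bookkeeping in the first stage: one must fix the orientation convention for $P$ (whether $P$ maps original order to parent-first order or the reverse) and confirm that ``parents first'' genuinely places the non-parent indices in positions $k+1, \dots, i$ of the permuted ordering, so that a zero pattern on $\{k+1, \dots, i\}$ for $w^\top L^{-1}$ corresponds exactly to a zero pattern on the non-parents for $b_{i+1}$. After that, the triangularity step is routine, and no positive-definiteness of $R_i$ is needed beyond what Lemma \ref{lem:biject} already uses (invertibility of $PL$, i.e. positivity of the diagonal of $L$).
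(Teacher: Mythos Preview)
Your proposal is correct and follows essentially the same approach as the paper: both strip off the permutation to reduce the zero pattern on non-parents to a zero pattern on the trailing $i-k$ coordinates of $b_{i+1}^\top P = w^\top L^{-1}$, and then use the triangular structure of $L^{-1}$ to show this is equivalent to $w$ having the same trailing zero pattern. The only cosmetic difference is that the paper establishes the triangularity step by backward induction on the index $j$, whereas you argue the two directions symmetrically via $v = L^{-\top} w$ and $w = L^\top v$; the content is the same.
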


    \begin{proof}
        Recall $P$ is a permutation matrix where $P^\top R_i P$ permutes the rows and columns of $R_i$ such that rows and columns corresponding to parents of $i+1$ come first. Accordingly, $b_{i+1}^\top P$ permutes the columns of $b_{i+1}^\top$ such that columns corresponding to parents of $i+1$ come first and the following are equivalent:
        \begin{itemize}
            \item $(b_{i+1}^\top)_j = 0 \s[8] \tu{for all} \s[8] j \not \in \pa{\mc G}{i+1}$;
            \item $(b_{i+1}^\top P)_j = 0 \s[8] \tu{for all} \s[8] j > |\pa{\mc G}{i + 1}|$.
        \end{itemize}
      


        \noindent By Equation \ref{eq:beta_conv}:
        \[
            b_{i+1}^\top P = w^\top L^{-1}
        \]
        where the $j^\text{th}$ element of $b_{i+1}^\top P$ is computed:
        \[
            (b_{i+1}^\top P)_{j} = \sum_{k = 1}^i (w^\top)_{k} \, (L^{-1})_{k,\,j}
        \]
        Moreover, $L^{-1}$ is lower triangular so $(L^{-1})_{k,\,j} = 0$ for all $k < i$:
        \[
            (b_{i+1}^\top P)_{j} = \sum_{k = j}^i (w^\top)_{k} \, (L^{-1})_{k,\,j}
        \]

        \noindent We show $(b_{i+1}^\top P)_{j} = 0$ for all $j > \pa{\mc G}{i + 1}$ if and only if $(w^\top)_{j} = 0$ for all $j > \pa{\mc G}{i + 1}$ by induction:
        \begin{itemize}
            \item Base case: If $i = j$, then $(b_{i+1}^\top P)_j = (w^\top)_j (L^{-1})_{j,j}$. Recall $L^{-1}$ has real and positive diagonal elements so $(b_{i+1}^\top P)_j = 0$ if and only if $(w^\top)_j = 0$.
            \item Inductive step: If $i > j$ and $(b_{i+1}^\top P)_{k} = (w^\top)_{k} = 0$ for all $k > j$, then:
            \[
                (b_{i+1}^\top P)_{j} = \sum_{k = j}^i (w^\top)_{k} \, (L^{-1})_{k,\,j} = (w^\top)_j (L^{-1})_{j,j}
            \]
            Recall $L^{-1}$ has real and positive diagonal elements so $(b_{i+1}^\top P)_{j} = 0$ if and only if $(w^\top)_{j} = 0$.
        \end{itemize}
            
        \noindent Accordingly, the following are equivalent:
        \begin{itemize}
            \item $(b_{i+1}^\top)_j = 0 \s[8] \tu{for all} \s[8] j \not \in \pa{\mc G}{i+1}$;
            \item $(w^\top)_{j} = 0 \s[8] \tu{for all} \s[8] j > |\pa{\mc G}{i + 1}|$.
        \end{itemize}
    Making the appropriate substitution in $\Phi(\mc G, R_i)$ proves the lemma.
    \end{proof}

    \begin{lemma}
        \label{lem:corr}
        $\Psi(R_i) \eq \Psi(P, L, i) \eq \{ PLw \; : \; w \in \mbb R^i, \s[8] w^\top w < 1 \}$
    \end{lemma}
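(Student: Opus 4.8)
The plan is to read off the claim from the change of variables $r = PLw$, $w = (PL)^{-1}r$ that is already in force, combined with the identity $\omega = 1 - w^\top w$ derived in Equation \ref{eq:omega_conv} and the bijectivity of $PL$ established in Lemma \ref{lem:biject}. Recall that, with the subscripts restored, $\Psi(R_i)$ is by definition the set of $r_{i+1} \in \mbb R^i$ for which $\omega_{i+1} = 1 - r_{i+1}^\top R_i^{-1} r_{i+1} > 0$; equivalently (by Lemma \ref{lem:r_pd}) the set of completions for which $R_{i+1}$ is positive definite. So the whole task is to transport this single scalar inequality through the linear map $PL$.

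First I would prove the inclusion $\Psi(R_i) \sube \{\, PLw : w \in \mbb R^i,\ w^\top w < 1 \,\}$. Given $r \in \Psi(R_i)$, Lemma \ref{lem:biject} says $PL$ is a bijection of $\mbb R^i$, so there is a (unique) $w$ with $r = PLw$. Substituting $r = PLw$ into $1 - r^\top R_i^{-1} r$ and cancelling, exactly as in the computation leading to Equation \ref{eq:omega_conv} — here one uses $L^\top P^\top P L^{-\top} L^{-1} P^\top P L = I$, which holds because $P$ is orthogonal and $L$ invertible — yields $1 - r^\top R_i^{-1} r = 1 - w^\top w$. The defining inequality $\omega_{i+1} > 0$ therefore reads $w^\top w < 1$, so $r$ lies in the right-hand set. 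For the reverse inclusion, take any $w \in \mbb R^i$ with $w^\top w < 1$ and set $r = PLw$; the same computation gives $\omega_{i+1} = 1 - r^\top R_i^{-1} r = 1 - w^\top w > 0$, so $r \in \Psi(R_i)$. The two inclusions give $\Psi(R_i) = \Psi(P, L, i)$, and (unlike in Lemma \ref{lem:dag}) no constraint on $w$ beyond $w^\top w < 1$ appears, because $\Psi$ encodes only positive-definiteness and not the DAG's zero pattern.

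I do not expect a genuine obstacle: the lemma is essentially Equation \ref{eq:omega_conv} reinterpreted under a bijective substitution. The only point needing a word of care is surjectivity of $w \mapsto PLw$ onto all of $\mbb R^i$ — so that \emph{every} candidate completion $r$ is of the form $PLw$ and the set equality (not merely a one-sided containment) holds — and this is precisely what Lemma \ref{lem:biject} supplies.
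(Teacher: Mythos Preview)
Your proposal is correct and follows essentially the same approach as the paper: invoke Lemma \ref{lem:biject} for bijectivity of $PL$, use Equation \ref{eq:omega_conv} to rewrite $1 - r^\top R_i^{-1} r$ as $1 - w^\top w$, and substitute into the definition of $\Psi(R_i)$. The paper compresses the two inclusions into a single sentence (``making the appropriate substitution''), but the content is identical.
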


    \begin{proof}
        By construction $r = PLw$ where $PL$ is a bijection by Lemma \ref{lem:biject}. Furthermore, $r_{i+1}^\top R_i^{-1} r_{i+1} = w^\top w$ by Equation \ref{eq:omega_conv}. Then, making the appropriate substitution in $\Psi(R_i)$ proves the lemma.
    \end{proof}

    Theorem \ref{thm:dao} proves that the DaO method samples positive definite correlation matrices that are Markov to $\mc G$ uniformly at random.

    \begin{theorem}
        \label{thm:dao}
        If $\mc G = (V, E)$ is a DAG and $m = | \{ i \in V \; : \; \pa{\mc G}{i} = \es \} |$, then initializing $R_m = I_{m \times m}$ and $\gamma_m = \frac{p-m}{2}$ and iteratively sampling completions from $\Phi(\mc G, P, L, i) \cap \Psi(P, L, i)$ where $w = \begin{bmatrix} w_1 \\ w_2 \end{bmatrix}$ where $w_1 \sim \textit{mPII}_k(\gamma_i)$ and $\gamma_{i+1} = \gamma_i - \frac{1}{2}$ samples $R_p$ uniformly from the space of correlation matrices Markov with respect to $\mc G$.
    \end{theorem}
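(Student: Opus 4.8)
The plan is to follow the layerwise construction directly: verify that every step preserves ``positive definite correlation matrix Markov to $\mc G$,'' multiply the per-layer conditional densities coming from the change of variables in Equation~(\ref{eq:r|R}), and check that the resulting product over the layers $m+1,\dots,p$ collapses to a constant precisely when $\gamma_i=\frac{p-i}{2}$. So the argument splits into a support part and a uniformity part.

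\emph{Support.} I would induct on the layers. At layer $i+1$, membership of the completion in $\Psi(P,L,i)$ is --- by Lemma~\ref{lem:corr} together with Lemma~\ref{lem:r_pd} --- exactly the statement that $R_{i+1}$ stays positive definite, while membership in $\Phi(\mc G,P,L,i)$ is --- by Lemma~\ref{lem:dag} --- exactly the vanishing of $b_{i+1}^\top$ off $\pa{\mc G}{i+1}$. Combining this with Equations~(\ref{eq:beta_comp})--(\ref{eq:omega_comp}) and the extension of~(\ref{eq:beta})--(\ref{eq:omega}) licensed by the ordered Markov property, a short block computation shows the accumulated triple $(R_p,B,\Omega)$ is a standardized recursive SEM for $\mc G$ (with $\Omega$ diagonal and strictly positive since $\omega_{i+1}=1-w^\top w>0$ on $\Psi$), so that $R_p=(I-B)^{-\top}\Omega(I-B)^{-1}$ is Markov to $\mc G$; the initialization $R_m=I_{m\times m}$ is forced because in a source-first order (Convention~\ref{conv:1}) the first $m$ vertices are exactly the sources, which the ordered Markov property makes mutually independent. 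Conversely, any correlation matrix $R^\ast$ Markov to $\mc G$ has successive principal blocks that determine unique completions lying in the respective $\Phi\cap\Psi$ (Markov $\Rightarrow$ the regression coefficients vanish off the parents; positive definiteness $\Rightarrow\omega_{i+1}>0$), and by Lemma~\ref{lem:biject} the corresponding $w_1$ lies in the open $|\pa{\mc G}{i+1}|$-ball on which the $\textit{mPII}_k(\gamma_i)$ density is strictly positive; hence the support is precisely the set of correlation matrices Markov to $\mc G$.

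\emph{Uniformity.} At layer $i+1$ we have $w_2=0$ and $w_1\sim\textit{mPII}_k(\gamma_i)$, so by $1-w_1^\top w_1=1-w^\top w$ the change of variables $r_{i+1}=PLw$ of Equation~(\ref{eq:r|R}) makes the conditional density of the completion given $R_i$ proportional to $\omega_{i+1}^{\gamma_i-\frac12}$ up to the determinant factor appearing in~(\ref{eq:r|R}), which depends on $R_i$ alone. Multiplying over $i=m,\dots,p-1$, using the Schur-complement identity $\det R_{i+1}=\omega_{i+1}\det R_i$ so those determinant factors telescope, and substituting $\gamma_i=\frac{p-i}{2}$ (equivalently $\gamma_m=\frac{p-m}{2}$, $\gamma_{i+1}=\gamma_i-\frac12$), every $\omega_{i+1}$-exponent cancels and the joint density is constant. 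For a complete DAG this is verbatim the classical Onion computation, so the result is the Lebesgue-uniform law on the elliptope \citep{ghosh2003behavior}; for a general $\mc G$ the same computation identifies the DaO law with the ``slice'' of that uniform law cut out by the ordered-Markov constraints of $\mc G$, which is what ``uniform over correlation matrices Markov to $\mc G$'' should mean. The transparent way to see this last identification is that conditioning the layer-$(i+1)$ draw of the classical Onion on $\{w_j=0:j>|\pa{\mc G}{i+1}|\}$ returns a lower-dimensional $\textit{mPII}$ of the same parameter --- since the $\textit{mPII}$ density restricted to a coordinate subspace is again an $\textit{mPII}$ density with the same $\gamma$ --- and this is exactly the DaO step, while conditioning the $m$ source layers on their trivial constraints collapses $R_m$ to $I$.

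\emph{Main obstacle.} The delicate point is the measure bookkeeping across the dimension drop: each completion lives in the $k$-dimensional subspace $\Phi(\mc G,P,L,i)$ rather than in all of $\mbb R^i$, so one must be precise about which Jacobian factor is meant in~(\ref{eq:r|R}) when $w_2=0$, and correspondingly about the reference measure on the Markov submanifold against which ``uniform'' is asserted, so that a constant density genuinely certifies uniformity rather than mere constancy in some arbitrary chart. Routing the argument through the ``condition the classical Onion layer by layer'' picture isolates the only real analytic input --- the elementary restriction property of the $\textit{mPII}$ density --- and reduces the remainder to the routine inductions that the stated $\gamma$-recurrence reproduces the classical Onion parameters and that initializing at $R_m=I$ coincides with running the full Onion and conditioning away the source layers.
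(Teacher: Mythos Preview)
Your proposal is correct and follows essentially the same route as the paper: Lemmas~\ref{lem:dag} and~\ref{lem:corr} pin down the support, and then the Schur-complement identity $\det R_{i+1}=\omega_{i+1}\det R_i$ makes the product of layerwise conditional densities from Equation~(\ref{eq:r|R}) telescope to $\det(R_p)^{\gamma_p}$, which is constant at $\gamma_p=0$. Your treatment is more thorough than the paper's---you argue the converse direction of the support claim, add the ``condition the classical Onion layer by layer'' viewpoint via the $\textit{mPII}$ restriction property, and explicitly flag the reference-measure subtlety that the paper leaves implicit---but the skeleton of the argument is identical.
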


    \begin{proof}
        By Lemmas \ref{lem:dag} and \ref{lem:corr}, this procedure samples the space of correlation matrices Markov with respect to $\mc G$.

        \vskip 5mm

        \noindent Noting $f(R_m \mid \gamma_m) = \det(R_m)^{\gamma_m} = 1$ and the following relation derived from Schur's compliment \citep{ouellette1981schur}:
        \begin{align*}
            \det(R_{i+1}) &= \det(R_{i}) \det(1 - r_{i+1}^\top R_{i}^{-1} r_{i+1}) \\
            &= (1 - r_{i+1}^\top R_{i}^{-1} r_{i+1}) \det(R_{i})
        \end{align*}
        we get the following recurrence relation for the completion $r_{i+1}$ of $R_i$ in $R_{i+1}$:
        \begin{align*}
            f(r_{i+1} \mid R_{i}, \, \gamma_i) &\propto (1 - r_{i+1}^\top R_i^{-1} r_{i+1})^{\gamma_i - \frac{1}{2}} \; \det(R_i)^{-\frac{1}{2}} \\
            &= (1 - r_{i+1}^\top R_{i}^{-1} r_{i+1})^{\gamma_i - \frac{1}{2}} \; \det(R_{i})^{-\frac{1}{2}} \; \frac{\det(R_{i})^{\gamma_i}}
            {\det(R_{i})^{\gamma_i}} \\
            &= \frac{(1 - r_{i+1}^\top R_{i}^{-1} r_{i+1})^{\gamma_i - \frac{1}{2}} \; \det(R_{i})^{\gamma_i - \frac{1}{2}}}{\det(R_{i})^{\gamma_i}} \\
            &= \frac{\left[(1 - r_{i+1}^\top R_{i}^{-1} r_{i+1}) \; \det(R_{i})\right]^{\gamma_i - \frac{1}{2}}}{\det(R_{i})^{\gamma_i}} \\
            &= \frac{\det(R_{i+1})^{\gamma_{i+1}}}{\det(R_{i})^{\gamma_i}}
        \end{align*}
    
        \noindent Lastly, by noting $f(R_{i+1} \mid \gamma_{i+1}) = f(r_{i+1} \mid R_i, \, \gamma_i) \; f(R_i \mid \gamma_i)$:
        \begin{align*}
            f(R_{p} \mid \gamma_{p}) &= \prod_{i=m}^{p-1} f(r_{i+1} \mid R_{i}, \, \gamma_i) \; f(R_m \mid \gamma_m) \\
            &\propto \prod_{i=m}^{p-1} \frac{\det(R_{i+1})^{\gamma_{i+1}}}{\det(R_{i})^{\gamma_i}} \; \det(R_m)^{\gamma_m} \\
            &= \det(R_{p})^{\gamma_p}
        \end{align*}
        where letting $\gamma_m = \frac{p-m}{2}$ implies $\gamma_i = \frac{p-i}{2}$ and $\gamma_p = 0$ which samples $R_p$ uniformly.
    \end{proof}

\subsection{The DAG-adaptation of the Onion Method}
\label{sec:dao}

    Algorithm \ref{alg:dao} ($\mt{DaO}$) provides pseudocode for the DaO method and employs two subroutines: 
    \begin{itemize}
        \item Algorithm \ref{alg:mpii} ($\mt{mPII}$) returns a $d$-dimensional column vector where the first $k$ entries are drawn from a multivariate Pearson type II distribution and the last $d-k$ entries are set to zero.
        \item Algorithm \ref{alg:pmat} ($\mt{P\tu{-}mat}$) returns a permutation matrix $P$ where $P^\top R_i P$ permutes the rows and columns of $R_i$ such that rows and columns corresponding to parents of $i+1$ come first.
    \end{itemize}
    In Algorithm \ref{alg:mpii}$ (\mt{mPII}$): $\mt{beta}(a, b)$ samples a beta distribution with parameters $a$ and $b$; $\mt{randn}(k)$ samples a $k$-dimensional standard normal distribution; and Lines 3--4: uniformly sample the surface of the unit $k$-sphere using a procedure outlined by \cite{muller1956some}.

    \noindent
    \begin{minipage}{0.58\textwidth} \begin{algorithm}[H]
    \DontPrintSemicolon
    \setstretch{1.1}
    \caption{$\mt{mPII}(\mc G, i)$}
    \label{alg:mpii}
    \KwIn{$\mt{DAG}: \mc G = (V, E) \; \text{s.t.} \; |V| = p \s[18] \mt{idx}: i$}
    \KwOut{$\mt{vector}: w$}
    $k \at |\pa{\mc G}{i + 1}| \; ; \s[24] \gamma \at \frac{p - i}{2}$ \;
    $q \sim \mt{beta}(\frac{k}{2}, \; \gamma + \frac{1}{2})$ \s[82] \tcp{$q \neq 1$}
    $y \sim \mt{randn}(k)$ \s[128] \tcp{$\Vert y \Vert_2 \neq 0$}
    $u \at \dfrac{y}{\vphantom{\big|} \s[5] \Vert y \Vert_2}$ \;
    $w \at 
    \begin{bmatrix}
         q^\frac{1}{2} u \\
         0_{(i-k)}
    \end{bmatrix}$ \;
\end{algorithm} \end{minipage}
    \begin{minipage}{0.01\textwidth} \hfill \end{minipage}
    \begin{minipage}{0.4\textwidth} \begin{algorithm}[H]
    \DontPrintSemicolon
    \setstretch{1.1}
    \caption{$\mt{P\tu{-}mat}(\mc G, i)$}
    \label{alg:pmat}
    \KwIn{$\mt{DAG}: \mc G \s[18] \mt{idx}: i$}
    \KwOut{$\mt{matrix}: P$}
    $P \at 0_{i \times i} \; ; \s[24] k \at 1$ \;
    \For{$j \at 1$ \KwTo $i$}{
        \If{$j \in \pa{\mc G}{i+1}$}{
            $P_{j,k} \at 1 \; ; \s[12] k \at k + 1$ \;
        }
    }
    \For{$j \at 1$ \KwTo $i$}{
        \If{$j \not \in \pa{\mc G}{i+1}$}{
            $P_{j,k} \at 1 \; ; \s[12] k \at k + 1$ \;
        }
    }
\end{algorithm}

    \begin{algorithm}[H]
    \DontPrintSemicolon
    \setstretch{1.1}
    \caption{$\mt{DaO}(\mc G)$}
    \label{alg:dao}
    \KwIn{$\mt{DAG}: \mc G = (V, E) \; \text{s.t.} \; |V| = p$}
    \KwOut{$\mt{corr}: R \s[24] \mt{coef}: B \s[24] \mt{err}\tu{-}\mt{var}: \Omega$}
    $m = | \{ i \in V \; : \; \pa{\mc G}{i} = \es \} |$ \;
    $R_m \at I_{m \times m} \; ; \s[24] B_m \at 0_{m \times m} \; ; \s[24] \Omega_m \at I_{m \times m}$ \;
    \For{$i \at m$ \KwTo $p-1$}{
        $w \sim \mt{mPII}(\mc G, i)$ \;
        $P \at \mt{P\tu{-}mat}(\mc G, i)$ \;
        $L L^\top \at P^\top R_i P$ \s[82] \tcp{Cholesky decomposition}
        $r_{i+1} \at P L w \; ; \s[90] b_{i+1}^\top \at w^\top L^{-1} P^\top ; \s[32] \omega_{i+1} \at 1 - w^\top w$ \;
        $R_{i+1} \at 
        \begin{bmatrix}
            R_{i} & r_{i+1} \\
            r_{i+1}^\top & 1
        \end{bmatrix}; \s[24]
        B_{i+1} \at 
        \begin{bmatrix}
            B_{i} & 0_i \\
            b_{i+1}^\top & 0
        \end{bmatrix}; \s[24]
        \Omega_{i+1} \at 
        \begin{bmatrix}
            \Omega_{i} & 0_i \\
            0_i^\top & \omega_{i+1}
        \end{bmatrix}$ \;
    }
\end{algorithm}
    In Algorithms \ref{alg:mpii}, \ref{alg:pmat}, and \ref{alg:dao}: $I_{m \times m}$ and $0_{m \times m}$ denote the $m \times m$ identity matrix and zero matrix respectively while $0_{(i-k)}$ denotes a column vector of $i-k$ zeros. 
    
    \begin{remark}
        \cite{lewandowski2009generating} give an extension to the onion method that samples correlation matrices proportional to a power of their determinant by choosing a different initial value for the parameter that we call $\gamma_i$. The same extension could be done here, but we did not add this parameter to the DaO method because we prioritized minimizing the number of arguments.
    \end{remark}

    Figure \ref{fig:dao} depicts the results of generating random correlation matrices in the simplest non-trivial cases using the DaO method. The cases are depicted in the first column of Table \ref{tab:er_dags}---from top to bottom, the rows correspond to the emitter, chain, and collider cases, respectively.

    \begin{figure}
        \centering
        \includegraphics[scale=0.5]{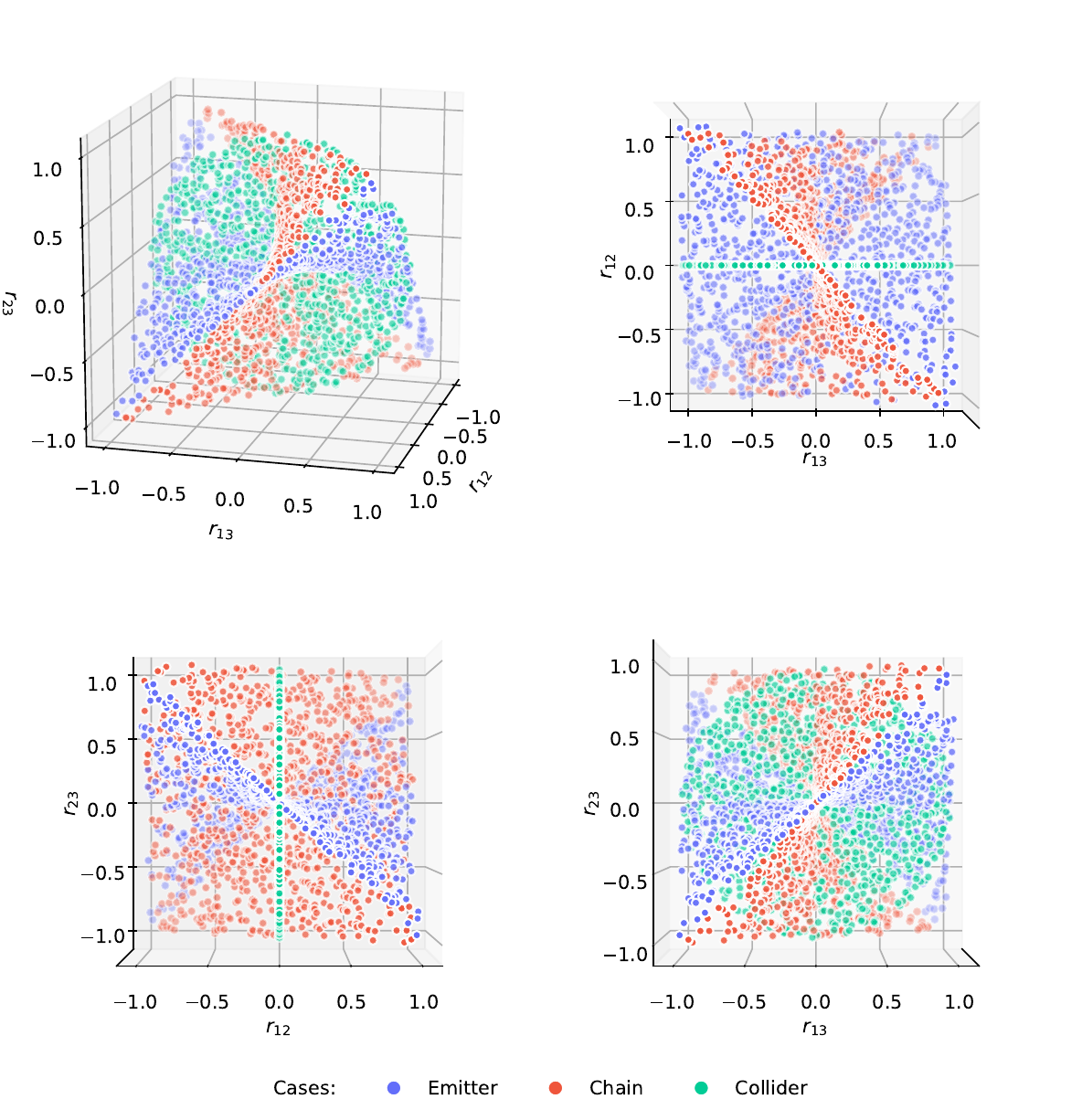}
        \caption{Uniformly sampled correlation matrices from DAGs with $|V| = 3$ and $|E| = 2$ corresponding to the $1 < 2 < 3$ column of Table \ref{tab:er_dags} with 100 repetition for each case. These are 2D projections of a 3D space, so many points are occluded by other points closer to the viewer.}
        \label{fig:dao}
    \end{figure}
    
\section{Evaluation of Parametric Models}
\label{sec:eval_of_params}

    This section evaluates and compares parametric models produced by the DaO method against parametric models produced by two alternative simulation designs. We evaluate and compare these methods on three different DAG types: Erd\H{o}s R{\'e}nyi (ER), scale-free in-degree (SFi), and scale-free out-degree (SFo). We ``standardize'' the models produced by the alternative methods to facilitate a meaningful comparison. This ensures all variables have unit variance and that we compare scale independent versions of the models.
    
    For convenience, we restate Equation \ref{eq:sigma}:
    \[
        \Sigma = (I - B)^{-\top} \, \Omega \, (I - B)^{-1}.
    \]
    The two alternative simulation designs parameterize $B$ and $\Omega$ directly where $B$ is a matrix of beta coefficients and $\Omega$ is a covariance matrix for additive error terms.

    \begin{itemize}
        \item Zheng, Aragam, Ravikuma, and Xing (ZARX) simulation \citep{zheng2018dags}:\footnote{This simulation was originally used by \cite{zheng2018dags} but nearly all continuous optimization-based CDA publications use this design as well.}
        \begin{align*}
            B &= (\beta_{i,j}) \; : \; 
            \begin{cases}
                \beta_{i,j} \sim \textit{uniform}( \, [ \, -2.0, \, -0.5 \, ] \, \cup \, [ \, 0.5, \, 2.0 \, ] \, ) & \s[8] \text{if} \s[8] j \in \pa{\mc G}{i} \\
                \beta_{i,j} = 0 & \s[8] \text{if} \s[8] j \not \in \pa{\mc G}{i}
            \end{cases} \\[2mm]
            \Omega &= (\omega_{i,j}) \; : \; 
            \begin{cases}
                \omega_{i,j} = 1 & \s[8] \text{if} \s[8] i = j \\
                \omega_{i,j} = 0 & \s[8] \text{if} \s[8] i \neq j
            \end{cases}
        \end{align*}
        \item Tetrad simulation \citep{andrews2023fast}:\footnote{Tetrad is a Java application for causal discovery. Simulations in Tetrad are flexible, however, this design was used in a recent publication whose authors included Joseph Ramsey---the lead developer and maintainer of Tetrad.}
        \begin{align*}
            B &= (\beta_{i,j}) \; : \; 
            \begin{cases}
                \beta_{i,j} \sim \textit{uniform}( \, [ \, -1.0, \, 1.0 \, ] \, ) & \s[8] \text{if} \s[8] j \in \pa{\mc G}{i} \\
                \beta_{i,j} = 0 & \s[8] \text{if} \s[8] j \not \in \pa{\mc G}{i}
            \end{cases} \\[2mm]
            \Omega &= (\omega_{i,j}) \; : \; 
            \begin{cases}
                \omega_{i,j} \sim \textit{uniform}( \, [ \, 1.0, \, 2.0 \, ] \, ) & \s[8] \text{if} \s[8] i = j \\
                \omega_{i,j} = 0 & \s[8] \text{if} \s[8] i \neq j
            \end{cases}
        \end{align*}
    \end{itemize}

    The methods above are ``standardized'' using Algorithms \ref{alg:cov_corr} and \ref{alg:cov_dag}. Algorithm \ref{alg:cov_corr} ($\mt{Cov}$-$\mt{to}$-$\mt{Corr}$) converts a covariance matrix to a correlation matrix and Algorithm \ref{alg:cov_dag} ($\mt{Cov}$-$\mt{to}$-$\mt{DAG}$) takes a DAG and covariance/correlation matrix as input and returns the corresponding $B$ and $\Omega$ parameters for the model---the parameters are standardized if the input is a DAG and a correlation matrix.
    
    \noindent
    \begin{minipage}{0.48\textwidth} \begin{algorithm}[H]
    \DontPrintSemicolon
    \setstretch{1.1}
    \caption{$\mt{Cov\tu{-}to\tu{-}Corr}(\Sigma)$}
    \label{alg:cov_corr}
    \KwIn{$\mt{cov}: \Sigma_{(p \times p)}$}
    \KwOut{$\mt{corr}: R$}
    $D \at 0_{p \times p}$ \;
    \For{$i \at 1$ \KwTo $p$}{
        $D_{i,i} \at (\Sigma_{i,i})^{-\frac{1}{2}}$ \;
    }
    $R \at D \, \Sigma \, D$ \; 
\end{algorithm} \end{minipage}
    \begin{minipage}{0.01\textwidth} \hfill \end{minipage}
    \begin{minipage}{0.48\textwidth} 

\begin{algorithm}[H]
    \DontPrintSemicolon
    \setstretch{1.1}
    \caption{$\mt{Cov\tu{-}to\tu{-}DAG}(\mc G, \Sigma)$}
    \label{alg:cov_dag}
    \KwIn{$\mt{DAG}: \mc G \s[18] \mt{cov}/\mt{corr}: \Sigma_{(p \times p)}$}
    \KwOut{$\mt{coef}: B \s[18] \mt{err}\tu{-}\mt{var}: \Omega$}
    $B \at 0_{p \times p} \; ; \s[24] \Omega \at 0_{p \times p}$ \;
    \For{$i \at 1$ \KwTo $p$}{
        $J \at \pa{\mc G}{i}$ \;
        $B_{i,J} \at \Sigma_{i,J} \, (\Sigma_{J,J})^{-1}$ \;
        $\Omega_{i,i} \at \Sigma_{i,i} - B_{i,J} \, \Sigma_{J,i}$ \;
    }
\end{algorithm} \end{minipage}

    In Algorithms \ref{alg:cov_corr} and \ref{alg:cov_dag}: $0_{p \times p}$ denotes the $p \times p$ zero matrix while the subscript of $\Sigma_{(p \times p)}$ denote its dimensions. We first evaluate the distributions of edge coefficients, implied absolute correlations, and the standard deviations of the independent error terms for each of these simulation methods. We then evaluate the $R^2$-sortability of these methods.

\subsection{Comparison of Simulation Designs}
\label{sec:compare_params}

    In this section, we compare the DaO method against the ZARX and Tetrad methods. Figure \ref{fig:beta_scatter} plots the magnitude of the standardized betas against the standard deviation of the additive error term for every edge with kernel density estimates for contours. Figures \ref{fig:er_sims}, \ref{fig:sfi_sims}, and \ref{fig:sfo_sims} depict empirical cumulative distribution functions (ECDFs) of the absolute beta coefficients, absolute correlations, and standard deviations of 100 models generated from DAGs with 100 variables and average degree $\alpha=10$. The values where the corresponding implied correlation is zero or one are omitted and vertical dotted lines indicate where the maximum value of each ECDF is obtained. Lastly, \cite{zheng2018dags} recommend using $0.3$ as a threshold for the absolute beta coefficients to determine if a non-zero value corresponds to an edge in continuous optimization-based methods---we mark this threshold with a vertical dashed line.

    \begin{figure}[ht]
        \centering
        \includegraphics[scale=0.6]{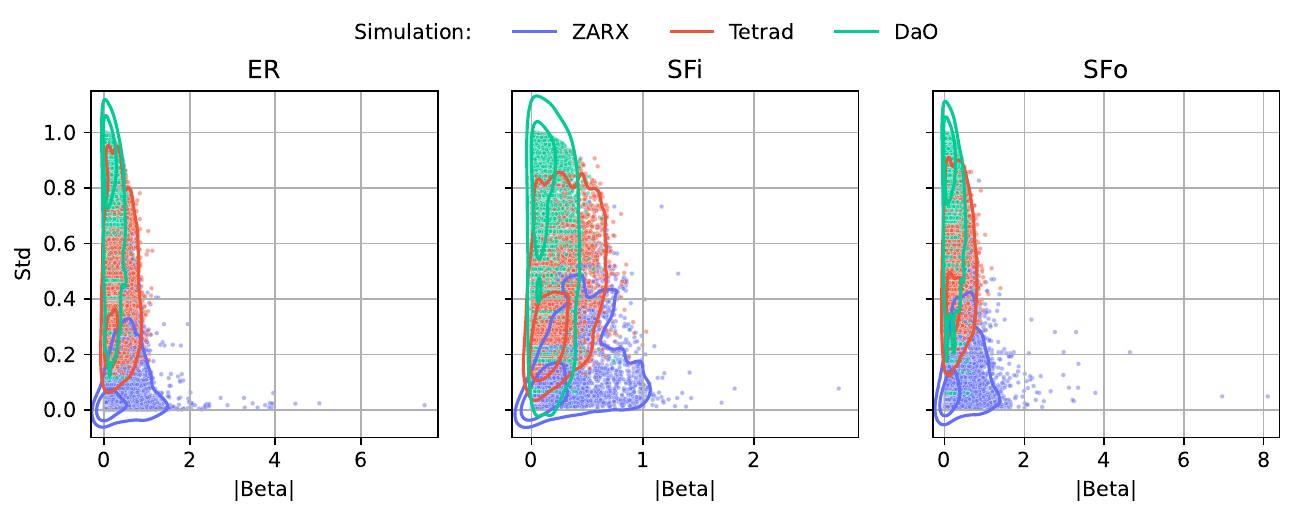}
        \caption{Properties of 10 models generated from ER/SFi/SFo-DAGs with $|V| = 100$ and $\alpha = 10$.}
        \label{fig:beta_scatter}
    \end{figure}

    \begin{figure}[ht]
        \centering
        \includegraphics[scale=0.6]{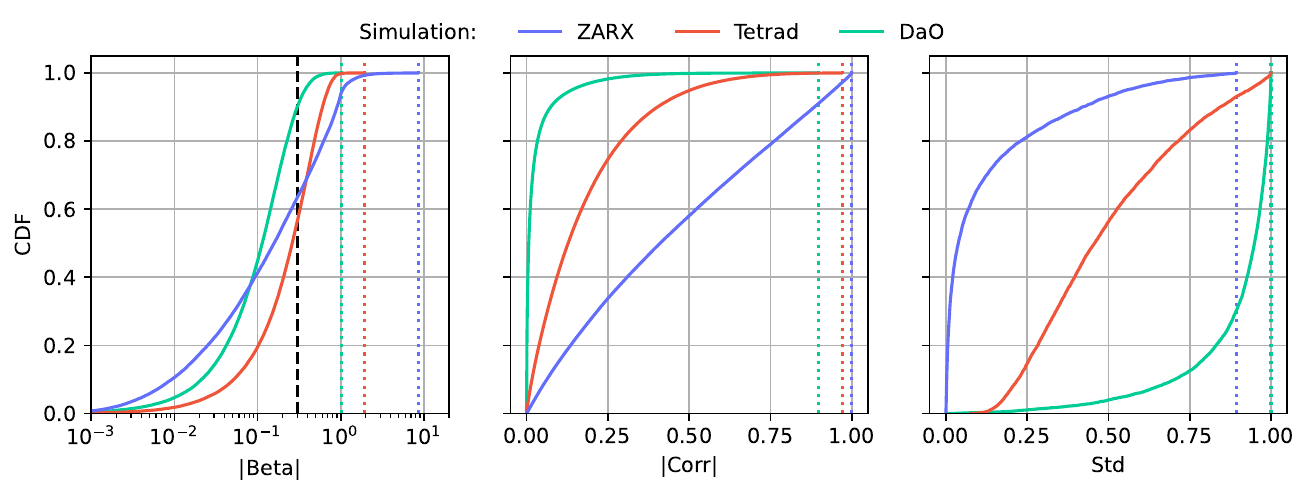}
        \caption{Properties of 100 models generated from ER-DAGs with $|V| = 100$ and $\alpha = 10$.}
        \label{fig:er_sims}
    \end{figure}
    
    \begin{figure}[ht]
        \centering
        \includegraphics[scale=0.6]{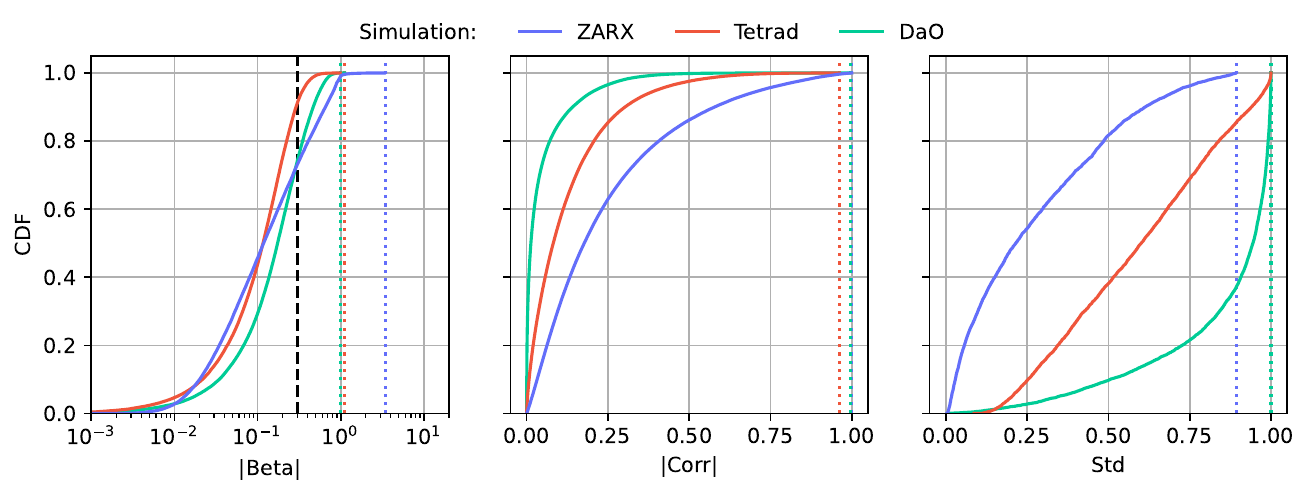}
        \caption{Properties of 100 models generated from SFi-DAGs with $|V| = 100$ and $\alpha = 10$.}
        \label{fig:sfi_sims}
    \end{figure}

    \begin{figure}[ht]
        \centering
        \includegraphics[scale=0.6]{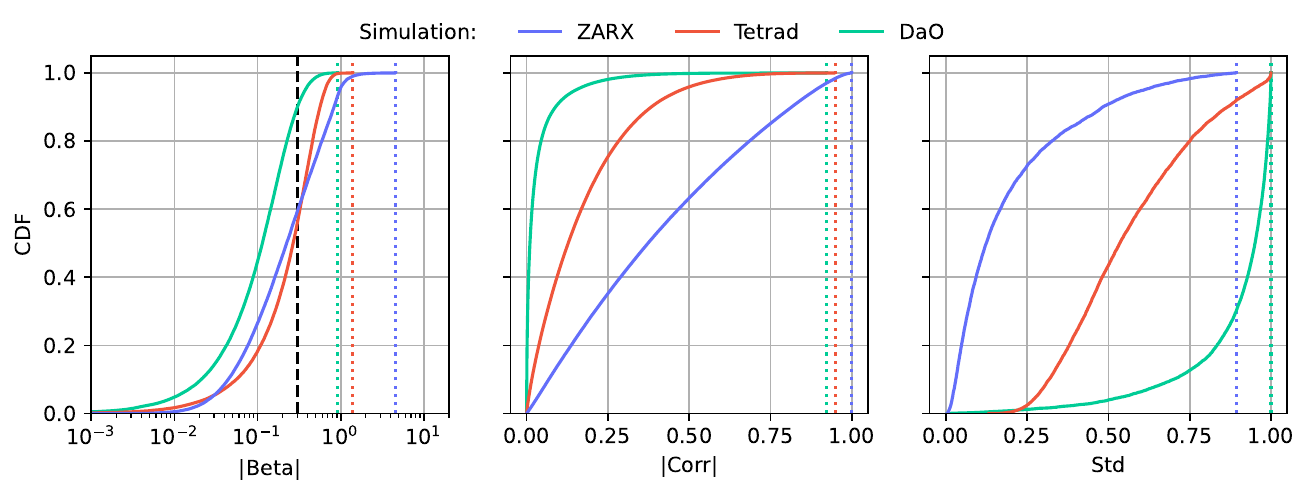}
        \caption{Properties of 100 models generated from SFo-DAGs with $|V| = 100$ and $\alpha = 10$.}
        \label{fig:sfo_sims}
    \end{figure}

    All three simulation methods generate substantially different distributions of parametric models. The ZARX method consistently produces more edges with large absolute beta coefficients relative to the other two methods, while DaO produces edges with absolute beta coefficients closer to zero. Similarly, the ZARX method produces the highest absolute correlation values, while DaO produces the lowest absolute correlation values. Correspondingly, the standard deviations of the error terms in ZARX models are the smallest, while the standard deviations of error terms in DaO models are the largest. 
    
    The vertical dashed lines in these figures indicate that the standard threshold used by continuous optimization-based methods rules out over 50\% of the true edges in the standardized versions of all three simulations. As such, all edges to the left of the vertical dashed lines can not be discovered by continuous optimization-based methods unless the user changes the threshold value---in Section \ref{sec:sim_study} we lower the threshold to $0.1$ (setting the threshold lower often results in cyclic models).

    The ZARX method samples beta coefficients from a distribution bounded away from zero in an attempt to avoid (near-)violations of faithfulness and to generate models with a strong signal-to-noise ratio. Our experiments show that this is a misconception. After standardization, the beta coefficients produced by ZARX simulations are no longer bounded away from zero and there are many weak effects. In fact, the beta coefficients produced by Tetrad simulations tend to be further from zero than the beta coefficients produced by ZARX simulations. Moreover, ZARX simulations produce many additive error terms with a standard deviation of approximately zero---this near determinism will induce (near-)violations of faithfulness.

    Also puzzling is the presence of beta coefficients with magnitude approximately equal to 10 in the standardized ZARX models. This is only possible if two or more parents vertices are almost perfectly (anti-)correlated and together have a jointly cancelling effect on a shared child vertex.

\subsection{$R^2$-Sortability of Parametric Models}
\label{sec:r2sort_of_params}

    Unlike varsortability which can be prevented by standardization \citep{reisach2021beware}, it is more difficult to prevent $R^2$-sortability \citep{reisach2023simple}. In this section, we evaluate the $R^2$-sortability of the parametric models produced by the three simulation methods. We generated 100 models with 20 variables and 100 models with 100 variables for each model structure: ER, SFi, and SFo. All models were generated with average degree $\alpha = 10$. 
    
    We quantified the $R^2$-sortability of each simulation method on each model structure type by calculating the rank correlation coefficient between each variable's $R^2$-sortability rank and its index from the variable order used to generate the DAG; see Section \ref{sec:sampling_dags}. 
    
    Simulated $R^2$-sortable data will have correlations approaching plus or minus one, while, simulated non-$R^2$-sortable data will have correlations closer to zero. Tables \ref{tab:r2s_20} and \ref{tab:r2s_100} show these correlations for the 20 variable and 100 variable models, respectively. Interestingly, we found that $R^2$-sortability varies substantially with model structure, as ER models typically had the highest $R^2$-sortability correlations, and SFo models typically had the lowest $R^2$-sortability correlations.

    \noindent
    \begin{minipage}{0.48\textwidth} \begin{table}[H]
    \centering
    \begin{tabular}{cccc}
        \toprule
        & ZARX & Tetrad & DaO \\
        \cmidrule(lr){2-2}
        \cmidrule(lr){3-3}
        \cmidrule(lr){4-4}
        ER & -0.859 & -0.500 & -0.264 \\
        SFi & -0.788 & -0.532 & -0.223 \\
        SFo & -0.624 & -0.126 & 0.033 \\
        \bottomrule
    \end{tabular}
    \caption{$R^2$ rank correlation on DAGs with $|V|=20$ and $\alpha = 10$ over 100 repetitions}
    \label{tab:r2s_20}
\end{table} \end{minipage}
    \begin{minipage}{0.01\textwidth} \hfill \end{minipage}
    \begin{minipage}{0.48\textwidth} \begin{table}[H]
    \centering
    \begin{tabular}{cccc}
        \toprule
        & ZARX & Tetrad & DaO \\
        \cmidrule(lr){2-2}
        \cmidrule(lr){3-3}
        \cmidrule(lr){4-4}
        ER & -0.873 & -0.557 & -0.386 \\
        SFi & -0.552 & -0.336 & -0.108 \\
        SFo & -0.497 & -0.051 & -0.084 \\
        \bottomrule
    \end{tabular}
    \caption{$R^2$ rank correlation on DAGs with $|V|=100$ and $\alpha = 10$ over 100 repetitions}
    \label{tab:r2s_100}
\end{table} \end{minipage}
    \vskip 5mm


    
    
    

    \begin{figure}[ht]
        \centering
        \includegraphics[scale=0.6]{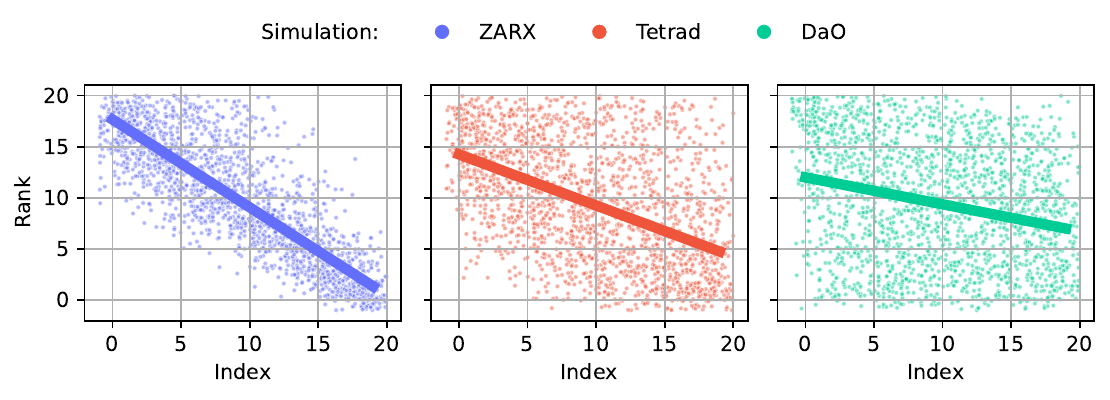}
        \caption{$R^2$-sortability of models generated from ER-DAGs with $|V| = 20$ and $\alpha = 10$ over 100 repetitions.}
        \label{fig:er_r2s_20}
    \end{figure}

    \begin{figure}[ht]
        \centering
        \includegraphics[scale=0.6]{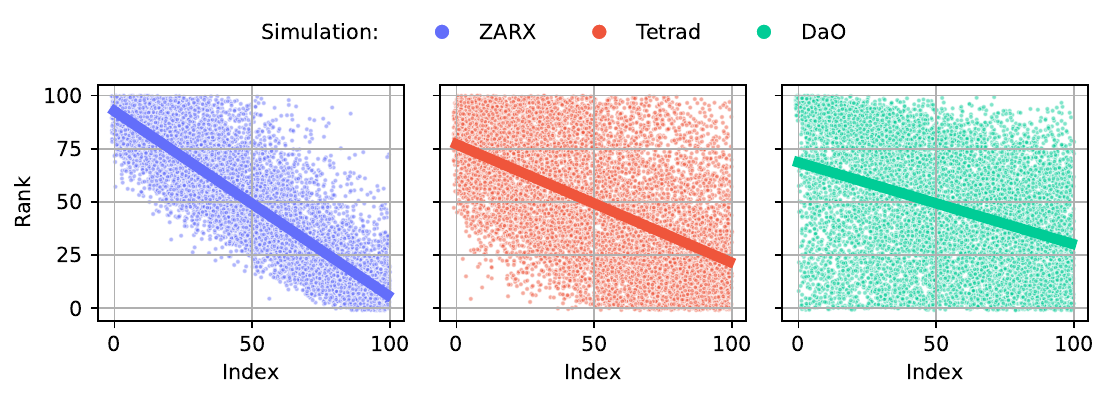}
        \caption{$R^2$-sortability of models generated from ER-DAGs with $|V| = 100$ and $\alpha = 10$ over 100 repetitions.}
        \label{fig:er_r2s_100}
    \end{figure}

    Tables \ref{tab:r2s_20} and \ref{tab:r2s_100} tabulate the levels of $R^2$-sortability for the three simulation methods while varying model structure. ZARX simulations have the highest $R^2$-sortability for all model structures, while $R^2$-sortability appears to be present in Tetrad simulations but at a weaker level than in ZARX simulations. Overall, DaO has lower $R^2$-sortability levels.

    

    \begin{figure}[ht]
        \centering
        \includegraphics[scale=0.6]{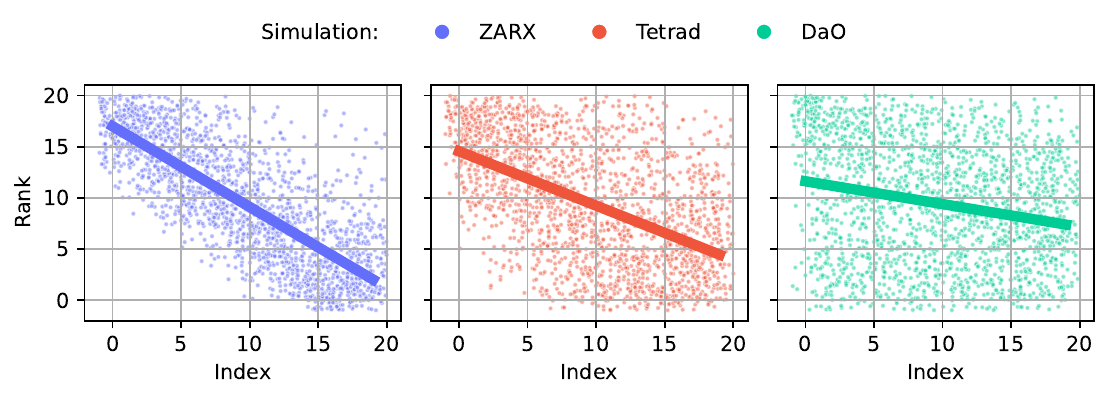}
        \caption{$R^2$-sortability of models generated from SFi-DAGs with $|V| = 20$ and $\alpha = 10$ over 100 repetitions.}
        \label{fig:sfi_r2s_20}
    \end{figure}

    \begin{figure}[ht]
        \centering
        \includegraphics[scale=0.6]{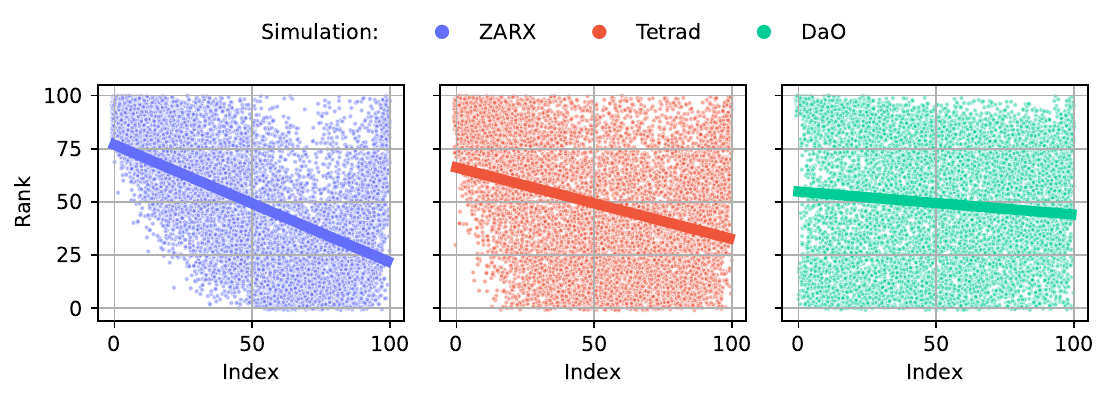}
        \caption{$R^2$-sortability of models generated from SFi-DAGs with $|V| = 100$ and $\alpha = 10$ over 100 repetitions.}
        \label{fig:sfi_r2s_100}
    \end{figure}

    \begin{figure}[ht]
        \centering
        \includegraphics[scale=0.6]{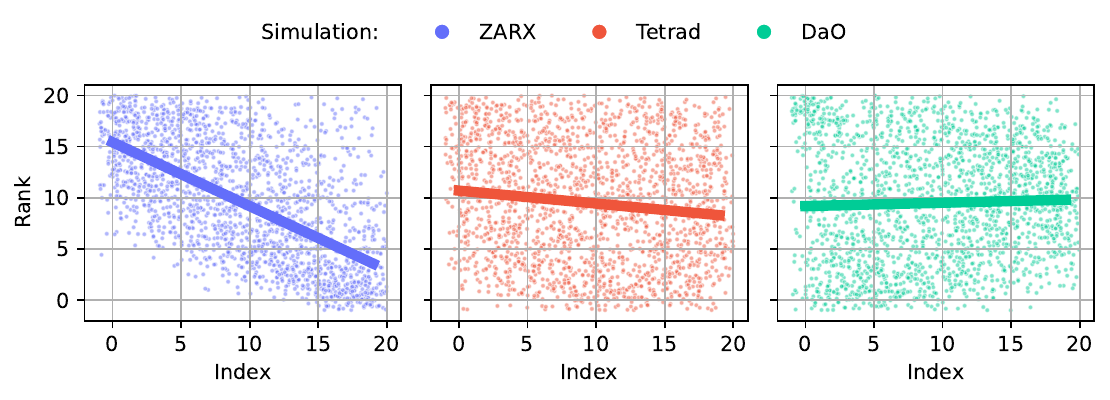}
        \caption{$R^2$-sortability of models generated from SFo-DAGs with $|V| = 20$ and $\alpha = 10$ over 100 repetitions.}
        \label{fig:sfo_r2s_20}
    \end{figure}
    
    \begin{figure}[ht]
        \centering
        \includegraphics[scale=0.6]{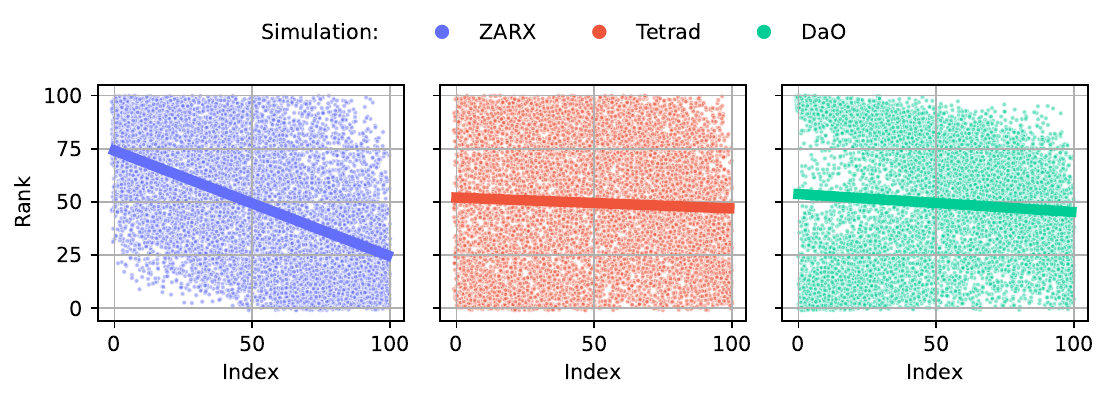}
        \caption{$R^2$-sortability of models generated from SFo-DAGs with $|V| = 100$ and $\alpha = 10$ over 100 repetitions.}
        \label{fig:sfo_r2s_100}
    \end{figure}

\section{Simulation Study}
\label{sec:sim_study}

    In this section, we report the results of a small simulation study testing a variety of CDAs on data generated using the DaO method and two alternatives: the ZARX and Tetrad methods.

    \begin{algorithm}
    \DontPrintSemicolon
    \setstretch{1.1}
    \caption{$\mt{simulate}(B, \Omega, \mc E, n)$}
    \label{alg:sim}
    \KwIn{$\mt{coef}: B_{(p \times p)} \s[18] \mt{err}\tu{-}\mt{var}: \Omega_{(p \times p)} \s[18] \mt{error}: \mc E \s[18] \mt{samples}: n$}
    \KwOut{$\mt{data}: X$}
    $X \at 0_{n \times p}$ \;
    \For{$i \at 1$ \KwTo $p$}{
        $Z_{i} \sim \mc E(\Omega_{i,i}, n)$ \;
        $X_{*,i} \at X \, (B_{i,*})^\top + Z_{i}$ \;
    }
\end{algorithm}
    In Algorithm \ref{alg:sim}: $0_{n \times p}$ denotes the $n \times p$ zero matrix while the subscripts of $B_{(p \times p)}$ and $\Omega_{(p \times p)}$ denote their dimensions. The second parameter of $\mc E$ gives the number of samples to draw.

    We used the metrics tabulated in Table \ref{tab:metrics} which were originally proposed by \cite{kummerfeld2023power}. All simulations were run with average degree $\alpha = 10$ over 20 repetitions at sample sizes: 200, 400, 600, 800, 1,000, 1,500, 2,000, 3,000, and 4,000. The number of variables were varied ($|V| = 20$ and $|V| = 100$) in addition to the method of simulation. All algorithms were run on an Apple M1 Pro processor with 16G of RAM. Algorithms were dropped from the analysis in cases where they took longer than 20 minutes per run.
    
    In the simulations, the algorithms we compared use the following tests and scores.
    \begin{itemize}
        \item Fisher Z test \citep{anderson1984introduction, Spirtes1993-od, kalisch2007estimating} is a test of zero-partial correlation using Fisher's z-transform; we set its parameter alpha $= 0.01$.
        \item Bayesian Information Criterion (BIC) score \citep{schwarz1978estimating, haughton1988choice} is a penalized likelihood score that approximates the marginal likelihood for curved exponential families; we set its parameter penalty discount $= 2$.
        \item Bayesian Gaussian equivalent (BGe) score \citep{geiger2002parameter, kuipers2014addendum} is a Bayesian score for a multivariate Gaussian distribution using conjugate priors; we use default parameters.
    \end{itemize}

    The simulations use multiple algorithms from multiple different software packages, outlined below.
    \begin{itemize}
        \item \url{https://github.com/kevinsbello/dagma}
        \begin{itemize}
            \item Directed Acyclic Graphs via M-matrices for Acyclicity (DAGMA) \citep{zheng2018dags, bello2022dagma} is a continuous optimization-based algorithm using an L1 penalty and the Adam optimizer; we set the parameter thresholded $= 0.1$ (or smallest threshold $> 0.1$ that returns a DAG), use defaults for all other parameters, and convert the output to a CPDAG.\footnote{A CPDAG is a graphical summarization of a set of DAGs that imply the same conditional independence relationships. Many CDAs output a CPDAG rather than a DAG.}
        \end{itemize}
        \item \url{https://github.com/cmu-phil/tetrad} \citep{ramsey2018tetrad}
        \begin{itemize}
            \item Best Order Score Search (BOSS) \citep{lam2022greedy, andrews2023fast} searches over permutations which are projected to DAGs using the BIC score; we set the parameter restarts $= 10$ and use defaults for all other parameters. 
            \item Fast Greedy Equivalence Search (fGES) \citep{chickering2002optimal, ramsey2017million} is an optimized version of the greedy equivalent search which searches over equivalence classes of DAGs by iteratively adding then removing edges until the model cannot be improved according to the BIC score; we use default parameters.
            \item Peter Clark (PC) \citep{Spirtes1993-od, colombo2014order} explicitly tests conditional independence relationships to rule out incompatible hypothesis graphs using the Fisher Z test; we set the parameter stable $= \textit{true}$ and use defaults for all other parameters. 
        \end{itemize}
        \item \url{https://github.com/CausalDisco/CausalDisco}
        \begin{itemize}
            \item $\sigma$-Sortability ($\sigma$-Sort) \citep{reisach2021beware} orders variables by iteratively minimizing a variance criterion; we use default parameters.
            \item $R^2$-Sortability ($R^2$-Sort) \citep{reisach2023simple} orders variables by iteratively minimizing a partial correlation criterion; we use default parameters.
        \end{itemize}
        \item \url{https://github.com/cdt15/lingam} \citep{ikeuchi2023python}
        \begin{itemize}
            \item Direct Linear Non-Gaussian Acyclic Model (dLiNGAM) \citep{shimizu2006linear, shimizu2011directlingam} orders the variables by iteratively minimizing a non-Gaussian criterion; we use default parameters.
        \end{itemize}
        \item \url{https://cran.r-project.org/web/packages/BiDAG/index.html} \citep{bidag}
        \begin{itemize}
            \item Iterative Markov Chain Monty Carlo (itMCMC) \citep{kuipers2022efficient} uses MCMC to sample orders that are partially constrained by an adjacency matrix learned by PC using the Fisher Z test and BGe score; we set the parameter hardlimit $= 20$ and use defaults for all other parameters.




            
        \end{itemize}
        \item \url{https://cran.r-project.org/web/packages/pchc/index.html} \citep{pchc}
        \begin{itemize}
            \item Max-Min Hill-Climbing (MMHC) \citep{tsamardinos2006max, tsagris2021new} initially grows Markov blankets by iteratively adding variables that maximize the minimum BIC score over subsets of the current Markov blankets. Explicit tests of conditional independence are then used to rule out incompatible hypothetical graphs using the Fisher Z test; we set the parameter restarts $= 10$ and use defaults for all other parameters.
        \end{itemize}

    \end{itemize}


    
    
    
    \begin{minipage}{0.3\textwidth}
    \vspace{12mm}
    \[
        \text{Precision} = \frac{\mt{tp}}{\mt{tp} + \mt{fp}}
    \]
    \vskip 5mm
    \[
        \text{Recall} = \frac{\mt{tp}}{\mt{tp} + \mt{fn}}
    \]
    \vspace{20mm}
\end{minipage}
\begin{minipage}{0.03\textwidth}
    \hfill
\end{minipage}
\begin{minipage}{0.6\textwidth}
    \begin{table}[H]
        \centering
        \small
        \begin{tabular}{cccc}
            \toprule
            True & Estimated & Adjacency & Orientation \\
            \cmidrule(lr){1-1}
            \cmidrule(lr){2-2}
            \cmidrule(lr){3-3}
            \cmidrule(lr){4-4}
            \multirow{4}{*}{$i \at j$} & $i \at j$ & $\mt{tp}$ & $\mt{tp}, \mt{tn}$ \\
            & $i \ta j$ & $\mt{tp}$ & $\mt{fp}, \mt{fn}$ \\
            & $i - \s[-15] - \s[6] j$ & $\mt{tp}$ & $\mt{fn}$ \\
            & $i \dots j$ & $\mt{fn}$ & $\mt{fn}$ \\
            \cmidrule(lr){1-1}
            \cmidrule(lr){2-2}
            \cmidrule(lr){3-3}
            \cmidrule(lr){4-4}
            \multirow{4}{*}{$i \dots j$} & $i \at j$ & $\mt{fp}$ & $\mt{fp}$ \\
            & $i \ta j$ & $\mt{fp}$ & $\mt{fp}$ \\
            & $i - \s[-15] - \s[6] j$ & $\mt{fp}$ & \\
            & $i \dots j$ & $\mt{tn}$ & \\
            \bottomrule
        \end{tabular}
        \caption{Metrics}
        \label{tab:metrics}
    \end{table}
\end{minipage}

\subsection{DaO Simulations}
\label{sec:dao_sims}

    In the DaO simulations, graphs were generated using the ER-DAG, SFi-DAG, and SFo-DAG methods. The model parameters were used to generate two versions of the simulated data: Gaussian error and non-Gaussian exponential error. The Gaussian version of the simulated data was used to evaluate all CDAs with one exception. The non-Gaussian version of the simulated data was used to evaluate dLiNGAM due to its dependence on a non-Gaussian signal. Accordingly, dLiNGAM had access to additional information that the other algorithms did not, which is reflected in its superior performance. The $\sigma$-sortability algorithm was not evaluated on DaO simulations for a similar reason---there is no signal to sort the variables since all variances are one.

    \begin{figure}
        \centering
        \begin{tabular}{rcc}
            \vphantom{\bigg|} & \s[20] 20 Variables &  \s[20] 100 Variables \\
            & \begin{tikzpicture}[scale=0.8]
    \begin{groupplot}[
        group style={
            group size=2 by 2,
            group name=samples,
            x descriptions at=edge bottom,
            y descriptions at=edge left,
            horizontal sep=2mm,
            vertical sep=2mm
        },
        xlabel=Sample Size,
        ymin=-.1, ymax=1.1,
        grid=both, xmode=log, log basis x=2,
        xlabel style={yshift=1mm},
        ylabel style={yshift=-1mm},
        width=5cm
    ]
    
        \nextgroupplot[ylabel=Precision]
        \addplot[mark=o, smooth, thick, color=color0] table[x=samples, y=dagma_adj_pre]{results/er_20.txt}; 
        \addplot[mark=square, smooth, thick, color=color1] table[x=samples, y=boss_adj_pre]{results/er_20.txt}; 
        \addplot[mark=+, smooth, thick, color=color2] table[x=samples, y=fges_adj_pre]{results/er_20.txt}; 
        \addplot[mark=triangle, smooth, thick, color=color3] table[x=samples, y=pc_adj_pre]{results/er_20.txt}; 
        \addplot[mark=Mercedes star, smooth, thick, color=color4] table[x=samples, y=r2sr_adj_pre]{results/er_20.txt}; 
        \addplot[mark=diamond, smooth, thick, color=color5] table[x=samples, y=dlingam_adj_pre]{results/er_20.txt}; 
        \addplot[mark=x, smooth, thick, color=color6] table[x=samples, y=itmcmc_adj_pre]{results/er_20.txt}; 
        \addplot[mark=star, smooth, thick, color=color8] table[x=samples, y=mmhc_adj_pre]{results/er_20.txt}; 

        \nextgroupplot
        \addplot[mark=o, smooth, thick, color=color0] table[x=samples, y=dagma_ori_pre]{results/er_20.txt};
        \addplot[mark=square, smooth, thick, color=color1] table[x=samples, y=boss_ori_pre]{results/er_20.txt};
        \addplot[mark=+, smooth, thick, color=color2] table[x=samples, y=fges_ori_pre]{results/er_20.txt};
        \addplot[mark=triangle, smooth, thick, color=color3] table[x=samples, y=pc_ori_pre]{results/er_20.txt};
        \addplot[mark=Mercedes star, smooth, thick, color=color4] table[x=samples, y=r2sr_ori_pre]{results/er_20.txt};    
        \addplot[mark=diamond, smooth, thick, color=color5] table[x=samples, y=dlingam_ori_pre]{results/er_20.txt};
        \addplot[mark=x, smooth, thick, color=color6] table[x=samples, y=itmcmc_ori_pre]{results/er_20.txt};
        \addplot[mark=star, smooth, thick, color=color8] table[x=samples, y=mmhc_ori_pre]{results/er_20.txt};
        
        \nextgroupplot[ylabel=Recall]
        \addplot[mark=o, smooth, thick, color=color0] table[x=samples, y=dagma_adj_rec]{results/er_20.txt};
        \addplot[mark=square, smooth, thick, color=color1] table[x=samples, y=boss_adj_rec]{results/er_20.txt};
        \addplot[mark=+, smooth, thick, color=color2] table[x=samples, y=fges_adj_rec]{results/er_20.txt};
        \addplot[mark=triangle, smooth, thick, color=color3] table[x=samples, y=pc_adj_rec]{results/er_20.txt};
        \addplot[mark=Mercedes star, smooth, thick, color=color4] table[x=samples, y=r2sr_adj_rec]{results/er_20.txt};    
        \addplot[mark=diamond, smooth, thick, color=color5] table[x=samples, y=dlingam_adj_rec]{results/er_20.txt};
        \addplot[mark=x, smooth, thick, color=color6] table[x=samples, y=itmcmc_adj_rec]{results/er_20.txt};
        \addplot[mark=star, smooth, thick, color=color8] table[x=samples, y=mmhc_adj_rec]{results/er_20.txt};

        \nextgroupplot
        \addplot[mark=o, smooth, thick, color=color0] table[x=samples, y=dagma_ori_rec]{results/er_20.txt};
        \addplot[mark=square, smooth, thick, color=color1] table[x=samples, y=boss_ori_rec]{results/er_20.txt};
        \addplot[mark=+, smooth, thick, color=color2] table[x=samples, y=fges_ori_rec]{results/er_20.txt};
        \addplot[mark=triangle, smooth, thick, color=color3] table[x=samples, y=pc_ori_rec]{results/er_20.txt};
        \addplot[mark=Mercedes star, smooth, thick, color=color4] table[x=samples, y=r2sr_ori_rec]{results/er_20.txt};    
        \addplot[mark=diamond, smooth, thick, color=color5] table[x=samples, y=dlingam_ori_rec]{results/er_20.txt};
        \addplot[mark=x, smooth, thick, color=color6] table[x=samples, y=itmcmc_ori_rec]{results/er_20.txt};
        \addplot[mark=star, smooth, thick, color=color8] table[x=samples, y=mmhc_ori_rec]{results/er_20.txt};
        
    \end{groupplot}
    \node[above=1mm of samples c1r1, scale=0.8] {Adjacency};
    \node[above=1mm of samples c2r1, scale=0.8] {Orientation};
\end{tikzpicture} & \begin{tikzpicture}[scale=0.8]
    \begin{groupplot}[
        group style={
            group size=2 by 2,
            group name=samples,
            x descriptions at=edge bottom,
            y descriptions at=edge left,
            horizontal sep=2mm,
            vertical sep=2mm
        },
        xlabel=Sample Size,
        ymin=-.1, ymax=1.1,
        grid=both, xmode=log, log basis x=2,
        xlabel style={yshift=1mm},
        ylabel style={yshift=-1mm},
        width=5cm
    ]
    
        \nextgroupplot
        \addplot[mark=o, smooth, thick, color=color0] table[x=samples, y=dagma_adj_pre]{results/er_100.txt}; 
        \addplot[mark=square, smooth, thick, color=color1] table[x=samples, y=boss_adj_pre]{results/er_100.txt}; 
        \addplot[mark=+, smooth, thick, color=color2] table[x=samples, y=fges_adj_pre]{results/er_100.txt}; 
        \addplot[mark=triangle, smooth, thick, color=color3] table[x=samples, y=pc_adj_pre]{results/er_100.txt}; 
        \addplot[mark=Mercedes star, smooth, thick, color=color4] table[x=samples, y=r2sr_adj_pre]{results/er_100.txt}; 
        \addplot[mark=diamond, smooth, thick, color=color5] table[x=samples, y=dlingam_adj_pre]{results/er_100.txt}; 
        \addplot[mark=star, smooth, thick, color=color8] table[x=samples, y=mmhc_adj_pre]{results/er_100.txt}; 
        
        \nextgroupplot
        \addplot[mark=o, smooth, thick, color=color0] table[x=samples, y=dagma_ori_pre]{results/er_100.txt};
        \addplot[mark=square, smooth, thick, color=color1] table[x=samples, y=boss_ori_pre]{results/er_100.txt};
        \addplot[mark=+, smooth, thick, color=color2] table[x=samples, y=fges_ori_pre]{results/er_100.txt};
        \addplot[mark=triangle, smooth, thick, color=color3] table[x=samples, y=pc_ori_pre]{results/er_100.txt};
        \addplot[mark=Mercedes star, smooth, thick, color=color4] table[x=samples, y=r2sr_ori_pre]{results/er_100.txt};    
        \addplot[mark=diamond, smooth, thick, color=color5] table[x=samples, y=dlingam_ori_pre]{results/er_100.txt};
        \addplot[mark=star, smooth, thick, color=color8] table[x=samples, y=mmhc_ori_pre]{results/er_100.txt};
        
        \nextgroupplot
        \addplot[mark=o, smooth, thick, color=color0] table[x=samples, y=dagma_adj_rec]{results/er_100.txt};
        \addplot[mark=square, smooth, thick, color=color1] table[x=samples, y=boss_adj_rec]{results/er_100.txt};
        \addplot[mark=+, smooth, thick, color=color2] table[x=samples, y=fges_adj_rec]{results/er_100.txt};
        \addplot[mark=triangle, smooth, thick, color=color3] table[x=samples, y=pc_adj_rec]{results/er_100.txt};
        \addplot[mark=Mercedes star, smooth, thick, color=color4] table[x=samples, y=r2sr_adj_rec]{results/er_100.txt};    
        \addplot[mark=diamond, smooth, thick, color=color5] table[x=samples, y=dlingam_adj_rec]{results/er_100.txt};
        \addplot[mark=star, smooth, thick, color=color8] table[x=samples, y=mmhc_adj_rec]{results/er_100.txt};
        
        \nextgroupplot
        \addplot[mark=o, smooth, thick, color=color0] table[x=samples, y=dagma_ori_rec]{results/er_100.txt};
        \addplot[mark=square, smooth, thick, color=color1] table[x=samples, y=boss_ori_rec]{results/er_100.txt};
        \addplot[mark=+, smooth, thick, color=color2] table[x=samples, y=fges_ori_rec]{results/er_100.txt};
        \addplot[mark=triangle, smooth, thick, color=color3] table[x=samples, y=pc_ori_rec]{results/er_100.txt};
        \addplot[mark=Mercedes star, smooth, thick, color=color4] table[x=samples, y=r2sr_ori_rec]{results/er_100.txt};    
        \addplot[mark=diamond, smooth, thick, color=color5] table[x=samples, y=dlingam_ori_rec]{results/er_100.txt};
        \addplot[mark=star, smooth, thick, color=color8] table[x=samples, y=mmhc_ori_rec]{results/er_100.txt};
        
    \end{groupplot}
    \node[above=1mm of samples c1r1, scale=0.8] {Adjacency};
    \node[above=1mm of samples c2r1, scale=0.8] {Orientation};
\end{tikzpicture} \\
            \vphantom{\bigg|} & \multicolumn{2}{c}{\begin{tikzpicture}[scale=1.0]
    \node[fill=white, draw=white, scale=1.0] at (0, 0) {
        \small
        \begin{tabular}{ccccccc}
            \multicolumn{7}{c}{\vphantom{\Big|}Algorithm:} \\
            \ref*{er1} DAGMA & \s[1] & \ref*{er2} BOSS & \s[1] & \ref*{er3} fGES & \s[1] & \ref*{er4} PC \\
            \ref*{er5} $R^2$-sort & \s[1] & \ref*{er6} dLiNGAM & \s[1] & \ref*{er7} itMCMC & \s[1] & \ref*{er8} MMHC
        \end{tabular}
    };
\end{tikzpicture}}
        \end{tabular}
        \caption{Mean performance of CDAs on data generated from ER-DAGs with $\alpha = 10$ using the DaO method for DAG Models over 20 repetitions.}
        \label{fig:sim_study_er}
    \end{figure}
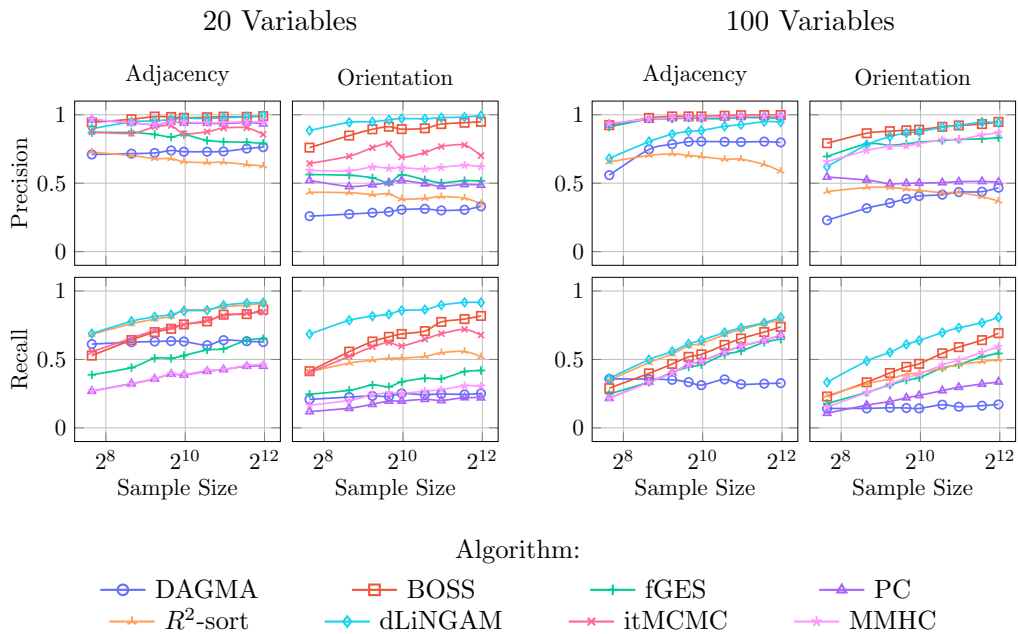

    \begin{figure}
        \centering
        \begin{tabular}{rcc}
            \vphantom{\bigg|} & \s[20] 20 Variables &  \s[20] 100 Variables \\
            & \begin{tikzpicture}[scale=0.8]
    \begin{groupplot}[
        group style={
            group size=2 by 2,
            group name=samples,
            x descriptions at=edge bottom,
            y descriptions at=edge left,
            horizontal sep=2mm,
            vertical sep=2mm
        },
        xlabel=Sample Size,
        ymin=-.1, ymax=1.1,
        grid=both, xmode=log, log basis x=2,
        xlabel style={yshift=1mm},
        ylabel style={yshift=-1mm},
        width=5cm
    ]
    
        \nextgroupplot[ylabel=Precision]
        \addplot[mark=o, smooth, thick, color=color0] table[x=samples, y=dagma_adj_pre]{results/sfi_20.txt}; 
        \addplot[mark=square, smooth, thick, color=color1] table[x=samples, y=boss_adj_pre]{results/sfi_20.txt}; 
        \addplot[mark=+, smooth, thick, color=color2] table[x=samples, y=fges_adj_pre]{results/sfi_20.txt}; 
        \addplot[mark=triangle, smooth, thick, color=color3] table[x=samples, y=pc_adj_pre]{results/sfi_20.txt}; 
        \addplot[mark=Mercedes star, smooth, thick, color=color4] table[x=samples, y=r2sr_adj_pre]{results/sfi_20.txt}; 
        \addplot[mark=diamond, smooth, thick, color=color5] table[x=samples, y=dlingam_adj_pre]{results/sfi_20.txt}; 
        \addplot[mark=x, smooth, thick, color=color6] table[x=samples, y=itmcmc_adj_pre]{results/sfi_20.txt}; 
        \addplot[mark=star, smooth, thick, color=color8] table[x=samples, y=mmhc_adj_pre]{results/sfi_20.txt}; 

        \nextgroupplot
        \addplot[mark=o, smooth, thick, color=color0] table[x=samples, y=dagma_ori_pre]{results/sfi_20.txt};
        \addplot[mark=square, smooth, thick, color=color1] table[x=samples, y=boss_ori_pre]{results/sfi_20.txt};
        \addplot[mark=+, smooth, thick, color=color2] table[x=samples, y=fges_ori_pre]{results/sfi_20.txt};
        \addplot[mark=triangle, smooth, thick, color=color3] table[x=samples, y=pc_ori_pre]{results/sfi_20.txt};
        \addplot[mark=Mercedes star, smooth, thick, color=color4] table[x=samples, y=r2sr_ori_pre]{results/sfi_20.txt};    
        \addplot[mark=diamond, smooth, thick, color=color5] table[x=samples, y=dlingam_ori_pre]{results/sfi_20.txt};
        \addplot[mark=x, smooth, thick, color=color6] table[x=samples, y=itmcmc_ori_pre]{results/sfi_20.txt};
        \addplot[mark=star, smooth, thick, color=color8] table[x=samples, y=mmhc_ori_pre]{results/sfi_20.txt};
        
        \nextgroupplot[ylabel=Recall]
        \addplot[mark=o, smooth, thick, color=color0] table[x=samples, y=dagma_adj_rec]{results/sfi_20.txt};
        \addplot[mark=square, smooth, thick, color=color1] table[x=samples, y=boss_adj_rec]{results/sfi_20.txt};
        \addplot[mark=+, smooth, thick, color=color2] table[x=samples, y=fges_adj_rec]{results/sfi_20.txt};
        \addplot[mark=triangle, smooth, thick, color=color3] table[x=samples, y=pc_adj_rec]{results/sfi_20.txt};
        \addplot[mark=Mercedes star, smooth, thick, color=color4] table[x=samples, y=r2sr_adj_rec]{results/sfi_20.txt};    
        \addplot[mark=diamond, smooth, thick, color=color5] table[x=samples, y=dlingam_adj_rec]{results/sfi_20.txt};
        \addplot[mark=x, smooth, thick, color=color6] table[x=samples, y=itmcmc_adj_rec]{results/sfi_20.txt};
        \addplot[mark=star, smooth, thick, color=color8] table[x=samples, y=mmhc_adj_rec]{results/sfi_20.txt};

        \nextgroupplot
        \addplot[mark=o, smooth, thick, color=color0] table[x=samples, y=dagma_ori_rec]{results/sfi_20.txt};
        \addplot[mark=square, smooth, thick, color=color1] table[x=samples, y=boss_ori_rec]{results/sfi_20.txt};
        \addplot[mark=+, smooth, thick, color=color2] table[x=samples, y=fges_ori_rec]{results/sfi_20.txt};
        \addplot[mark=triangle, smooth, thick, color=color3] table[x=samples, y=pc_ori_rec]{results/sfi_20.txt};
        \addplot[mark=Mercedes star, smooth, thick, color=color4] table[x=samples, y=r2sr_ori_rec]{results/sfi_20.txt};    
        \addplot[mark=diamond, smooth, thick, color=color5] table[x=samples, y=dlingam_ori_rec]{results/sfi_20.txt};
        \addplot[mark=x, smooth, thick, color=color6] table[x=samples, y=itmcmc_ori_rec]{results/sfi_20.txt};
        \addplot[mark=star, smooth, thick, color=color8] table[x=samples, y=mmhc_ori_rec]{results/sfi_20.txt};
        
    \end{groupplot}
    \node[above=1mm of samples c1r1, scale=0.8] {Adjacency};
    \node[above=1mm of samples c2r1, scale=0.8] {Orientation};
\end{tikzpicture} & \begin{tikzpicture}[scale=0.8]
    \begin{groupplot}[
        group style={
            group size=2 by 2,
            group name=samples,
            x descriptions at=edge bottom,
            y descriptions at=edge left,
            horizontal sep=2mm,
            vertical sep=2mm
        },
        xlabel=Sample Size,
        ymin=-.1, ymax=1.1,
        grid=both, xmode=log, log basis x=2,
        xlabel style={yshift=1mm},
        ylabel style={yshift=-1mm},
        width=5cm
    ]
    
        \nextgroupplot
        \addplot[mark=o, smooth, thick, color=color0] table[x=samples, y=dagma_adj_pre]{results/sfi_100.txt}; 
        \addplot[mark=square, smooth, thick, color=color1] table[x=samples, y=boss_adj_pre]{results/sfi_100.txt}; 
        \addplot[mark=+, smooth, thick, color=color2] table[x=samples, y=fges_adj_pre]{results/sfi_100.txt}; 
        \addplot[mark=Mercedes star, smooth, thick, color=color4] table[x=samples, y=r2sr_adj_pre]{results/sfi_100.txt}; 
        \addplot[mark=diamond, smooth, thick, color=color5] table[x=samples, y=dlingam_adj_pre]{results/sfi_100.txt}; 
        \addplot[mark=star, smooth, thick, color=color8] table[x=samples, y=mmhc_adj_pre]{results/sfi_100.txt}; 
        
        \nextgroupplot
        \addplot[mark=o, smooth, thick, color=color0] table[x=samples, y=dagma_ori_pre]{results/sfi_100.txt};
        \addplot[mark=square, smooth, thick, color=color1] table[x=samples, y=boss_ori_pre]{results/sfi_100.txt};
        \addplot[mark=+, smooth, thick, color=color2] table[x=samples, y=fges_ori_pre]{results/sfi_100.txt};
        \addplot[mark=Mercedes star, smooth, thick, color=color4] table[x=samples, y=r2sr_ori_pre]{results/sfi_100.txt};    
        \addplot[mark=diamond, smooth, thick, color=color5] table[x=samples, y=dlingam_ori_pre]{results/sfi_100.txt};
        \addplot[mark=star, smooth, thick, color=color8] table[x=samples, y=mmhc_ori_pre]{results/sfi_100.txt};
        
        \nextgroupplot
        \addplot[mark=o, smooth, thick, color=color0] table[x=samples, y=dagma_adj_rec]{results/sfi_100.txt};
        \addplot[mark=square, smooth, thick, color=color1] table[x=samples, y=boss_adj_rec]{results/sfi_100.txt};
        \addplot[mark=+, smooth, thick, color=color2] table[x=samples, y=fges_adj_rec]{results/sfi_100.txt};
        \addplot[mark=Mercedes star, smooth, thick, color=color4] table[x=samples, y=r2sr_adj_rec]{results/sfi_100.txt};    
        \addplot[mark=diamond, smooth, thick, color=color5] table[x=samples, y=dlingam_adj_rec]{results/sfi_100.txt};
        \addplot[mark=star, smooth, thick, color=color8] table[x=samples, y=mmhc_adj_rec]{results/sfi_100.txt};
        
        \nextgroupplot
        \addplot[mark=o, smooth, thick, color=color0] table[x=samples, y=dagma_ori_rec]{results/sfi_100.txt};
        \addplot[mark=square, smooth, thick, color=color1] table[x=samples, y=boss_ori_rec]{results/sfi_100.txt};
        \addplot[mark=+, smooth, thick, color=color2] table[x=samples, y=fges_ori_rec]{results/sfi_100.txt};
        \addplot[mark=Mercedes star, smooth, thick, color=color4] table[x=samples, y=r2sr_ori_rec]{results/sfi_100.txt};    
        \addplot[mark=diamond, smooth, thick, color=color5] table[x=samples, y=dlingam_ori_rec]{results/sfi_100.txt};
        \addplot[mark=star, smooth, thick, color=color8] table[x=samples, y=mmhc_ori_rec]{results/sfi_100.txt};
        
    \end{groupplot}
    \node[above=1mm of samples c1r1, scale=0.8] {Adjacency};
    \node[above=1mm of samples c2r1, scale=0.8] {Orientation};
\end{tikzpicture} \\
            \vphantom{\bigg|} & \multicolumn{2}{c}{\begin{tikzpicture}[scale=1.0]
    \node[fill=white, draw=white, scale=1.0] at (0, 0) {
        \small
        \begin{tabular}{ccccccc}
            \multicolumn{7}{c}{\vphantom{\Big|}Algorithm:} \\
            \ref*{er1} DAGMA & \s[1] & \ref*{er2} BOSS & \s[1] & \ref*{er3} fGES & \s[1] & \ref*{er4} PC \\
            \ref*{er5} $R^2$-sort & \s[1] & \ref*{er6} dLiNGAM & \s[1] & \ref*{er7} itMCMC & \s[1] & \ref*{er8} MMHC
        \end{tabular}
    };
\end{tikzpicture}}
        \end{tabular}
        \caption{Mean performance of CDAs on data generated from SFi-DAGs with $\alpha = 10$ using the DaO method for DAG Models over 20 repetitions.}
        \label{fig:sim_study_sfi}
    \end{figure}

    \begin{figure}
        \centering
        \begin{tabular}{rcc}
            \vphantom{\bigg|} & \s[20] 20 Variables &  \s[20] 100 Variables \\
            & \begin{tikzpicture}[scale=0.8]
    \begin{groupplot}[
        group style={
            group size=2 by 2,
            group name=samples,
            x descriptions at=edge bottom,
            y descriptions at=edge left,
            horizontal sep=2mm,
            vertical sep=2mm
        },
        xlabel=Sample Size,
        ymin=-.1, ymax=1.1,
        grid=both, xmode=log, log basis x=2,
        xlabel style={yshift=1mm},
        ylabel style={yshift=-1mm},
        width=5cm
    ]
    
        \nextgroupplot[ylabel=Precision]
        \addplot[mark=o, smooth, thick, color=color0] table[x=samples, y=dagma_adj_pre]{results/sfo_20.txt}; 
        \addplot[mark=square, smooth, thick, color=color1] table[x=samples, y=boss_adj_pre]{results/sfo_20.txt}; 
        \addplot[mark=+, smooth, thick, color=color2] table[x=samples, y=fges_adj_pre]{results/sfo_20.txt}; 
        \addplot[mark=triangle, smooth, thick, color=color3] table[x=samples, y=pc_adj_pre]{results/sfo_20.txt}; 
        \addplot[mark=Mercedes star, smooth, thick, color=color4] table[x=samples, y=r2sr_adj_pre]{results/sfo_20.txt}; 
        \addplot[mark=diamond, smooth, thick, color=color5] table[x=samples, y=dlingam_adj_pre]{results/sfo_20.txt}; 
        \addplot[mark=x, smooth, thick, color=color6] table[x=samples, y=itmcmc_adj_pre]{results/sfo_20.txt}; 
        \addplot[mark=star, smooth, thick, color=color8] table[x=samples, y=mmhc_adj_pre]{results/sfo_20.txt}; 

        \nextgroupplot
        \addplot[mark=o, smooth, thick, color=color0] table[x=samples, y=dagma_ori_pre]{results/sfo_20.txt};
        \addplot[mark=square, smooth, thick, color=color1] table[x=samples, y=boss_ori_pre]{results/sfo_20.txt};
        \addplot[mark=+, smooth, thick, color=color2] table[x=samples, y=fges_ori_pre]{results/sfo_20.txt};
        \addplot[mark=triangle, smooth, thick, color=color3] table[x=samples, y=pc_ori_pre]{results/sfo_20.txt};
        \addplot[mark=Mercedes star, smooth, thick, color=color4] table[x=samples, y=r2sr_ori_pre]{results/sfo_20.txt};    
        \addplot[mark=diamond, smooth, thick, color=color5] table[x=samples, y=dlingam_ori_pre]{results/sfo_20.txt};
        \addplot[mark=x, smooth, thick, color=color6] table[x=samples, y=itmcmc_ori_pre]{results/sfo_20.txt};
        \addplot[mark=star, smooth, thick, color=color8] table[x=samples, y=mmhc_ori_pre]{results/sfo_20.txt};
        
        \nextgroupplot[ylabel=Recall]
        \addplot[mark=o, smooth, thick, color=color0] table[x=samples, y=dagma_adj_rec]{results/sfo_20.txt};
        \addplot[mark=square, smooth, thick, color=color1] table[x=samples, y=boss_adj_rec]{results/sfo_20.txt};
        \addplot[mark=+, smooth, thick, color=color2] table[x=samples, y=fges_adj_rec]{results/sfo_20.txt};
        \addplot[mark=triangle, smooth, thick, color=color3] table[x=samples, y=pc_adj_rec]{results/sfo_20.txt};
        \addplot[mark=Mercedes star, smooth, thick, color=color4] table[x=samples, y=r2sr_adj_rec]{results/sfo_20.txt};    
        \addplot[mark=diamond, smooth, thick, color=color5] table[x=samples, y=dlingam_adj_rec]{results/sfo_20.txt};
        \addplot[mark=x, smooth, thick, color=color6] table[x=samples, y=itmcmc_adj_rec]{results/sfo_20.txt};
        \addplot[mark=star, smooth, thick, color=color8] table[x=samples, y=mmhc_adj_rec]{results/sfo_20.txt};

        \nextgroupplot
        \addplot[mark=o, smooth, thick, color=color0] table[x=samples, y=dagma_ori_rec]{results/sfo_20.txt};
        \addplot[mark=square, smooth, thick, color=color1] table[x=samples, y=boss_ori_rec]{results/sfo_20.txt};
        \addplot[mark=+, smooth, thick, color=color2] table[x=samples, y=fges_ori_rec]{results/sfo_20.txt};
        \addplot[mark=triangle, smooth, thick, color=color3] table[x=samples, y=pc_ori_rec]{results/sfo_20.txt};
        \addplot[mark=Mercedes star, smooth, thick, color=color4] table[x=samples, y=r2sr_ori_rec]{results/sfo_20.txt};    
        \addplot[mark=diamond, smooth, thick, color=color5] table[x=samples, y=dlingam_ori_rec]{results/sfo_20.txt};
        \addplot[mark=x, smooth, thick, color=color6] table[x=samples, y=itmcmc_ori_rec]{results/sfo_20.txt};
        \addplot[mark=star, smooth, thick, color=color8] table[x=samples, y=mmhc_ori_rec]{results/sfo_20.txt};
        
    \end{groupplot}
    \node[above=1mm of samples c1r1, scale=0.8] {Adjacency};
    \node[above=1mm of samples c2r1, scale=0.8] {Orientation};
\end{tikzpicture} & \begin{tikzpicture}[scale=0.8]
    \begin{groupplot}[
        group style={
            group size=2 by 2,
            group name=samples,
            x descriptions at=edge bottom,
            y descriptions at=edge left,
            horizontal sep=2mm,
            vertical sep=2mm
        },
        xlabel=Sample Size,
        ymin=-.1, ymax=1.1,
        grid=both, xmode=log, log basis x=2,
        xlabel style={yshift=1mm},
        ylabel style={yshift=-1mm},
        width=5cm
    ]
    
        \nextgroupplot
        \addplot[mark=o, smooth, thick, color=color0] table[x=samples, y=dagma_adj_pre]{results/sfo_100.txt}; 
        \addplot[mark=square, smooth, thick, color=color1] table[x=samples, y=boss_adj_pre]{results/sfo_100.txt}; 
        \addplot[mark=+, smooth, thick, color=color2] table[x=samples, y=fges_adj_pre]{results/sfo_100.txt}; 
        \addplot[mark=Mercedes star, smooth, thick, color=color4] table[x=samples, y=r2sr_adj_pre]{results/sfo_100.txt}; 
        \addplot[mark=diamond, smooth, thick, color=color5] table[x=samples, y=dlingam_adj_pre]{results/sfo_100.txt}; 
        \addplot[mark=star, smooth, thick, color=color8] table[x=samples, y=mmhc_adj_pre]{results/sfo_100.txt}; 
        
        \nextgroupplot
        \addplot[mark=o, smooth, thick, color=color0] table[x=samples, y=dagma_ori_pre]{results/sfo_100.txt};
        \addplot[mark=square, smooth, thick, color=color1] table[x=samples, y=boss_ori_pre]{results/sfo_100.txt};
        \addplot[mark=+, smooth, thick, color=color2] table[x=samples, y=fges_ori_pre]{results/sfo_100.txt};
        \addplot[mark=Mercedes star, smooth, thick, color=color4] table[x=samples, y=r2sr_ori_pre]{results/sfo_100.txt};    
        \addplot[mark=diamond, smooth, thick, color=color5] table[x=samples, y=dlingam_ori_pre]{results/sfo_100.txt};
        \addplot[mark=star, smooth, thick, color=color8] table[x=samples, y=mmhc_ori_pre]{results/sfo_100.txt};
        
        \nextgroupplot
        \addplot[mark=o, smooth, thick, color=color0] table[x=samples, y=dagma_adj_rec]{results/sfo_100.txt};
        \addplot[mark=square, smooth, thick, color=color1] table[x=samples, y=boss_adj_rec]{results/sfo_100.txt};
        \addplot[mark=+, smooth, thick, color=color2] table[x=samples, y=fges_adj_rec]{results/sfo_100.txt};
        \addplot[mark=Mercedes star, smooth, thick, color=color4] table[x=samples, y=r2sr_adj_rec]{results/sfo_100.txt};    
        \addplot[mark=diamond, smooth, thick, color=color5] table[x=samples, y=dlingam_adj_rec]{results/sfo_100.txt};
        \addplot[mark=star, smooth, thick, color=color8] table[x=samples, y=mmhc_adj_rec]{results/sfo_100.txt};
        
        \nextgroupplot
        \addplot[mark=o, smooth, thick, color=color0] table[x=samples, y=dagma_ori_rec]{results/sfo_100.txt};
        \addplot[mark=square, smooth, thick, color=color1] table[x=samples, y=boss_ori_rec]{results/sfo_100.txt};
        \addplot[mark=+, smooth, thick, color=color2] table[x=samples, y=fges_ori_rec]{results/sfo_100.txt};
        \addplot[mark=Mercedes star, smooth, thick, color=color4] table[x=samples, y=r2sr_ori_rec]{results/sfo_100.txt};    
        \addplot[mark=diamond, smooth, thick, color=color5] table[x=samples, y=dlingam_ori_rec]{results/sfo_100.txt};
        \addplot[mark=star, smooth, thick, color=color8] table[x=samples, y=mmhc_ori_rec]{results/sfo_100.txt};
        
    \end{groupplot}
    \node[above=1mm of samples c1r1, scale=0.8] {Adjacency};
    \node[above=1mm of samples c2r1, scale=0.8] {Orientation};
\end{tikzpicture} \\
            \vphantom{\bigg|} & \multicolumn{2}{c}{\begin{tikzpicture}[scale=1.0]
    \node[fill=white, draw=white, scale=1.0] at (0, 0) {
        \small
        \begin{tabular}{ccccccc}
            \multicolumn{7}{c}{\vphantom{\Big|}Algorithm:} \\
            \ref*{er1} DAGMA & \s[1] & \ref*{er2} BOSS & \s[1] & \ref*{er3} fGES & \s[1] & \ref*{er4} PC \\
            \ref*{er5} $R^2$-sort & \s[1] & \ref*{er6} dLiNGAM & \s[1] & \ref*{er7} itMCMC & \s[1] & \ref*{er8} MMHC
        \end{tabular}
    };
\end{tikzpicture}}
        \end{tabular}
        \caption{Mean performance of CDAs on data generated from SFo-DAGs with $\alpha = 10$ using the DaO method for DAG Models over 20 repetitions.}
        \label{fig:sim_study_sfo}
    \end{figure}

\subsection{Alternative Simulations}
\label{sec:alt_sims}

    In the alternative simulations, all graphs were generated using the ER-DAG method. The model parameters were used to generate two versions of the simulated data: Gaussian error and non-Gaussian exponential error. The Gaussian version of the simulated data was used to evaluate all CDAs with one exception. The non-Gaussian version of the simulated data was used to evaluate dLiNGAM due to its dependence on a non-Gaussian signal. Accordingly, dLiNGAM had access to additional information that the other algorithms did not, which is reflected in its superior performance. The simulated data were not standardized before running any of the CDAs with one exception. DAGMA was run twice: once on the unstandardized data (DAGMA) and once on the standardized data (sDAGMA). 

    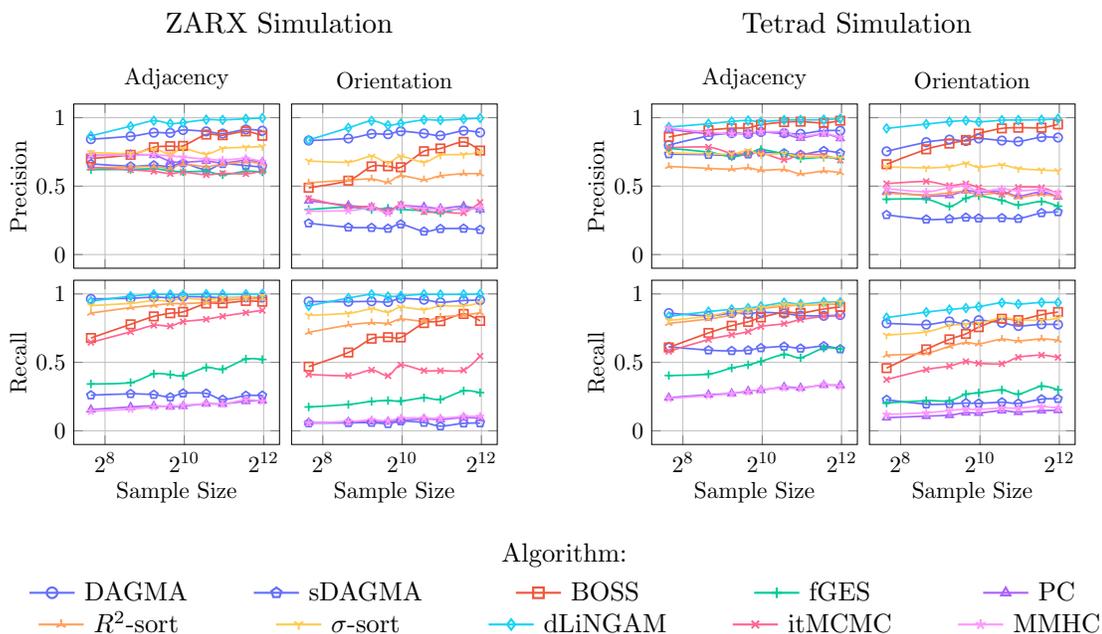
\begin{figure}
        \centering
        \begin{tabular}{rcc}
            \vphantom{\bigg|} & \s[20] ZARX Simulation &  \s[20] Tetrad Simulation \\
            & \begin{tikzpicture}[scale=0.8]
    \begin{groupplot}[
        group style={
            group size=2 by 2,
            group name=samples,
            x descriptions at=edge bottom,
            y descriptions at=edge left,
            horizontal sep=2mm,
            vertical sep=2mm
        },
        xlabel=Sample Size,
        ymin=-.1, ymax=1.1,
        grid=both, xmode=log, log basis x=2,
        xlabel style={yshift=1mm},
        ylabel style={yshift=-1mm},
        width=5cm
    ]
    
        \nextgroupplot[ylabel=Precision]
        \addplot[mark=o, smooth, thick, color=color0] table[x=samples, y=dagma_adj_pre]{results/dagma_20.txt}; \label{er1}
        \addplot[mark=pentagon, smooth, thick, color=color0] table[x=samples, y=sdagma_adj_pre]{results/dagma_20.txt}; \label{er10}
        \addplot[mark=square, smooth, thick, color=color1] table[x=samples, y=boss_adj_pre]{results/dagma_20.txt}; \label{er2}
        \addplot[mark=+, smooth, thick, color=color2] table[x=samples, y=fges_adj_pre]{results/dagma_20.txt}; \label{er3}
        \addplot[mark=triangle, smooth, thick, color=color3] table[x=samples, y=pc_adj_pre]{results/dagma_20.txt}; \label{er4}
        \addplot[mark=Mercedes star, smooth, thick, color=color4] table[x=samples, y=r2sr_adj_pre]{results/dagma_20.txt}; \label{er5}
        \addplot[mark=diamond, smooth, thick, color=color5] table[x=samples, y=dlingam_adj_pre]{results/dagma_20.txt}; \label{er6}
        \addplot[mark=x, smooth, thick, color=color6] table[x=samples, y=itmcmc_adj_pre]{results/dagma_20.txt}; \label{er7}
        \addplot[mark=star, smooth, thick, color=color8] table[x=samples, y=mmhc_adj_pre]{results/dagma_20.txt}; \label{er8}
        \addplot[mark=Mercedes star flipped, smooth, thick, color=color9] table[x=samples, y=varsr_adj_pre]{results/dagma_20.txt}; \label{er9}

        \nextgroupplot
        \addplot[mark=o, smooth, thick, color=color0] table[x=samples, y=dagma_ori_pre]{results/dagma_20.txt};
        \addplot[mark=pentagon, smooth, thick, color=color0] table[x=samples, y=sdagma_ori_pre]{results/dagma_20.txt};
        \addplot[mark=square, smooth, thick, color=color1] table[x=samples, y=boss_ori_pre]{results/dagma_20.txt};
        \addplot[mark=+, smooth, thick, color=color2] table[x=samples, y=fges_ori_pre]{results/dagma_20.txt};
        \addplot[mark=triangle, smooth, thick, color=color3] table[x=samples, y=pc_ori_pre]{results/dagma_20.txt};
        \addplot[mark=Mercedes star, smooth, thick, color=color4] table[x=samples, y=r2sr_ori_pre]{results/dagma_20.txt};    
        \addplot[mark=diamond, smooth, thick, color=color5] table[x=samples, y=dlingam_ori_pre]{results/dagma_20.txt};
        \addplot[mark=x, smooth, thick, color=color6] table[x=samples, y=itmcmc_ori_pre]{results/dagma_20.txt};
        \addplot[mark=star, smooth, thick, color=color8] table[x=samples, y=mmhc_ori_pre]{results/dagma_20.txt};
        \addplot[mark=Mercedes star flipped, smooth, thick, color=color9] table[x=samples, y=varsr_ori_pre]{results/dagma_20.txt};    
        
        \nextgroupplot[ylabel=Recall]
        \addplot[mark=o, smooth, thick, color=color0] table[x=samples, y=dagma_adj_rec]{results/dagma_20.txt};
        \addplot[mark=pentagon, smooth, thick, color=color0] table[x=samples, y=sdagma_adj_rec]{results/dagma_20.txt};
        \addplot[mark=square, smooth, thick, color=color1] table[x=samples, y=boss_adj_rec]{results/dagma_20.txt};
        \addplot[mark=+, smooth, thick, color=color2] table[x=samples, y=fges_adj_rec]{results/dagma_20.txt};
        \addplot[mark=triangle, smooth, thick, color=color3] table[x=samples, y=pc_adj_rec]{results/dagma_20.txt};
        \addplot[mark=Mercedes star, smooth, thick, color=color4] table[x=samples, y=r2sr_adj_rec]{results/dagma_20.txt};    
        \addplot[mark=diamond, smooth, thick, color=color5] table[x=samples, y=dlingam_adj_rec]{results/dagma_20.txt};
        \addplot[mark=x, smooth, thick, color=color6] table[x=samples, y=itmcmc_adj_rec]{results/dagma_20.txt};
        \addplot[mark=star, smooth, thick, color=color8] table[x=samples, y=mmhc_adj_rec]{results/dagma_20.txt};
        \addplot[mark=Mercedes star flipped, smooth, thick, color=color9] table[x=samples, y=varsr_adj_rec]{results/dagma_20.txt};    

        \nextgroupplot
        \addplot[mark=o, smooth, thick, color=color0] table[x=samples, y=dagma_ori_rec]{results/dagma_20.txt};
        \addplot[mark=pentagon, smooth, thick, color=color0] table[x=samples, y=sdagma_ori_rec]{results/dagma_20.txt};
        \addplot[mark=square, smooth, thick, color=color1] table[x=samples, y=boss_ori_rec]{results/dagma_20.txt};
        \addplot[mark=+, smooth, thick, color=color2] table[x=samples, y=fges_ori_rec]{results/dagma_20.txt};
        \addplot[mark=triangle, smooth, thick, color=color3] table[x=samples, y=pc_ori_rec]{results/dagma_20.txt};
        \addplot[mark=Mercedes star, smooth, thick, color=color4] table[x=samples, y=r2sr_ori_rec]{results/dagma_20.txt};    
        \addplot[mark=diamond, smooth, thick, color=color5] table[x=samples, y=dlingam_ori_rec]{results/dagma_20.txt};
        \addplot[mark=x, smooth, thick, color=color6] table[x=samples, y=itmcmc_ori_rec]{results/dagma_20.txt};
        \addplot[mark=star, smooth, thick, color=color8] table[x=samples, y=mmhc_ori_rec]{results/dagma_20.txt};
        \addplot[mark=Mercedes star flipped, smooth, thick, color=color9] table[x=samples, y=varsr_ori_rec]{results/dagma_20.txt};    
        
    \end{groupplot}
    \node[above=1mm of samples c1r1, scale=0.8] {Adjacency};
    \node[above=1mm of samples c2r1, scale=0.8] {Orientation};
\end{tikzpicture} & \begin{tikzpicture}[scale=0.8]
    \begin{groupplot}[
        group style={
            group size=2 by 2,
            group name=samples,
            x descriptions at=edge bottom,
            y descriptions at=edge left,
            horizontal sep=2mm,
            vertical sep=2mm
        },
        xlabel=Sample Size,
        ymin=-.1, ymax=1.1,
        grid=both, xmode=log, log basis x=2,
        xlabel style={yshift=1mm},
        ylabel style={yshift=-1mm},
        width=5cm
    ]
    
        \nextgroupplot[ylabel=Precision]
        \addplot[mark=o, smooth, thick, color=color0] table[x=samples, y=dagma_adj_pre]{results/boss_20.txt}; 
        \addplot[mark=pentagon, smooth, thick, color=color0] table[x=samples, y=sdagma_adj_pre]{results/boss_20.txt}; 
        \addplot[mark=square, smooth, thick, color=color1] table[x=samples, y=boss_adj_pre]{results/boss_20.txt}; 
        \addplot[mark=+, smooth, thick, color=color2] table[x=samples, y=fges_adj_pre]{results/boss_20.txt}; 
        \addplot[mark=triangle, smooth, thick, color=color3] table[x=samples, y=pc_adj_pre]{results/boss_20.txt}; 
        \addplot[mark=Mercedes star, smooth, thick, color=color4] table[x=samples, y=r2sr_adj_pre]{results/boss_20.txt}; 
        \addplot[mark=diamond, smooth, thick, color=color5] table[x=samples, y=dlingam_adj_pre]{results/boss_20.txt}; 
        \addplot[mark=x, smooth, thick, color=color6] table[x=samples, y=itmcmc_adj_pre]{results/boss_20.txt}; 
        \addplot[mark=star, smooth, thick, color=color8] table[x=samples, y=mmhc_adj_pre]{results/boss_20.txt}; 
        \addplot[mark=Mercedes star flipped, smooth, thick, color=color9] table[x=samples, y=varsr_adj_pre]{results/boss_20.txt}; 

        \nextgroupplot
        \addplot[mark=o, smooth, thick, color=color0] table[x=samples, y=dagma_ori_pre]{results/boss_20.txt};
        \addplot[mark=pentagon, smooth, thick, color=color0] table[x=samples, y=sdagma_ori_pre]{results/boss_20.txt};
        \addplot[mark=square, smooth, thick, color=color1] table[x=samples, y=boss_ori_pre]{results/boss_20.txt};
        \addplot[mark=+, smooth, thick, color=color2] table[x=samples, y=fges_ori_pre]{results/boss_20.txt};
        \addplot[mark=triangle, smooth, thick, color=color3] table[x=samples, y=pc_ori_pre]{results/boss_20.txt};
        \addplot[mark=Mercedes star, smooth, thick, color=color4] table[x=samples, y=r2sr_ori_pre]{results/boss_20.txt};    
        \addplot[mark=diamond, smooth, thick, color=color5] table[x=samples, y=dlingam_ori_pre]{results/boss_20.txt};
        \addplot[mark=x, smooth, thick, color=color6] table[x=samples, y=itmcmc_ori_pre]{results/boss_20.txt};
        \addplot[mark=star, smooth, thick, color=color8] table[x=samples, y=mmhc_ori_pre]{results/boss_20.txt};
        \addplot[mark=Mercedes star flipped, smooth, thick, color=color9] table[x=samples, y=varsr_ori_pre]{results/boss_20.txt};    
        
        \nextgroupplot[ylabel=Recall]
        \addplot[mark=o, smooth, thick, color=color0] table[x=samples, y=dagma_adj_rec]{results/boss_20.txt};
        \addplot[mark=pentagon, smooth, thick, color=color0] table[x=samples, y=sdagma_adj_rec]{results/boss_20.txt};
        \addplot[mark=square, smooth, thick, color=color1] table[x=samples, y=boss_adj_rec]{results/boss_20.txt};
        \addplot[mark=+, smooth, thick, color=color2] table[x=samples, y=fges_adj_rec]{results/boss_20.txt};
        \addplot[mark=triangle, smooth, thick, color=color3] table[x=samples, y=pc_adj_rec]{results/boss_20.txt};
        \addplot[mark=Mercedes star, smooth, thick, color=color4] table[x=samples, y=r2sr_adj_rec]{results/boss_20.txt};    
        \addplot[mark=diamond, smooth, thick, color=color5] table[x=samples, y=dlingam_adj_rec]{results/boss_20.txt};
        \addplot[mark=x, smooth, thick, color=color6] table[x=samples, y=itmcmc_adj_rec]{results/boss_20.txt};
        \addplot[mark=star, smooth, thick, color=color8] table[x=samples, y=mmhc_adj_rec]{results/boss_20.txt};
        \addplot[mark=Mercedes star flipped, smooth, thick, color=color9] table[x=samples, y=varsr_adj_rec]{results/boss_20.txt};    

        \nextgroupplot
        \addplot[mark=o, smooth, thick, color=color0] table[x=samples, y=dagma_ori_rec]{results/boss_20.txt};
        \addplot[mark=pentagon, smooth, thick, color=color0] table[x=samples, y=sdagma_ori_rec]{results/boss_20.txt};
        \addplot[mark=square, smooth, thick, color=color1] table[x=samples, y=boss_ori_rec]{results/boss_20.txt};
        \addplot[mark=+, smooth, thick, color=color2] table[x=samples, y=fges_ori_rec]{results/boss_20.txt};
        \addplot[mark=triangle, smooth, thick, color=color3] table[x=samples, y=pc_ori_rec]{results/boss_20.txt};
        \addplot[mark=Mercedes star, smooth, thick, color=color4] table[x=samples, y=r2sr_ori_rec]{results/boss_20.txt};    
        \addplot[mark=diamond, smooth, thick, color=color5] table[x=samples, y=dlingam_ori_rec]{results/boss_20.txt};
        \addplot[mark=x, smooth, thick, color=color6] table[x=samples, y=itmcmc_ori_rec]{results/boss_20.txt};
        \addplot[mark=star, smooth, thick, color=color8] table[x=samples, y=mmhc_ori_rec]{results/boss_20.txt};
        \addplot[mark=Mercedes star flipped, smooth, thick, color=color9] table[x=samples, y=varsr_ori_rec]{results/boss_20.txt};    
        
    \end{groupplot}
    \node[above=1mm of samples c1r1, scale=0.8] {Adjacency};
    \node[above=1mm of samples c2r1, scale=0.8] {Orientation};
\end{tikzpicture} \\
            \vphantom{\bigg|} & \multicolumn{2}{c}{\begin{tikzpicture}[scale=1.0]
    \node[fill=white, draw=white, scale=1.0] at (0, 0) {
        \small
        \begin{tabular}{ccccccccc}
            \multicolumn{9}{c}{\vphantom{\Big|}Algorithm:} \\
            \ref*{er1} DAGMA & \s[1] & \ref*{er10} sDAGMA & \s[1] & \ref*{er2} BOSS & \s[1] & \ref*{er3} fGES & \s[1] & \ref*{er4} PC \\
            \ref*{er5} $R^2$-sort & \s[1] & \ref*{er9} $\sigma$-sort & \s[1] & \ref*{er6} dLiNGAM & \s[1] & \ref*{er7} itMCMC & \s[1] & \ref*{er8} MMHC
        \end{tabular}
    };
\end{tikzpicture}}
        \end{tabular}
        \caption{Mean performance of CDAs on data generated from ER-DAGs with 20 variables and $\alpha = 10$ over 20 repetitions.}
        \label{fig:sim_study_other}
    \end{figure}

    As noted in Section \ref{sec:eval_of_params}, ZARX simulations create near deterministic relationships between the variables. We see that methods such as PC and GES perform poorly in the presence of this near determinism. We conjecture this is because the near deterministic relationships result in (near-)violations of faithfulness and local optima. Moreover, the difference in performance between DAGMA and sDAGMA highlights the complains of several papers targeting the simulations commonly used to evaluate continuous optimization-based methods.

\section{Discussion}
\label{sec:discussion}

    We present new methods for simulating data for validating causal discovery; Python and R implementations are available: \url{https://github.com/bja43/DaO_simulation}. Given a DAG $\mc G$, these methods include: (i) DAG-sampling methods to modify $\mc G$ to have scale-free in-degree or out-degree, and (ii) the DaO method which samples uniformly from the space of correlation matrices satisfying the Markov property for $\mc G$. 
    
    The former methods are useful since many real-world systems are believed to have scale-free structure. The latter method is distinct compared to simulation methods used in past studies due to its generic nature and inherent fairness. It is generic because it draws models that uniformly cover the space of all possible correlation matrices. Further, it is fair because it does not rely on tuning parameters that can be manipulated to favor some algorithms over others. As a result, methods that perform well on DaO simulations will likely excel across many situations. 


    Several papers have criticized a common data simulation practice. These papers identified properties they claimed are unrealistic and could be unfairly exploited by learning algorithms. We have shown here that these properties are largely not present in parametric models generated by the DaO method. 
    
    Interestingly, $R^2$-sortability is still present, albeit weakly, in data generated by the DaO method. Since the DaO method samples models that uniformly cover the space of correlation matrices Markov to the DAG, this indicates that there may be something about DAG structures themselves that can produce $R^2$-sortability. This observation is supported by the dependence between the DAG structure sampling method (ER, SFo, SFi) and the rank correlation coefficient between each variable's $R^2$-sortability rank and its index from the variable order used to generate the DAG; see Tables \ref{tab:r2s_20} and \ref{tab:r2s_100}.

\subsection{Insights into the Conflicting Results of Previous Simulations}

    This paper also sheds some light on the ongoing debate surrounding continuous optimization-based methods.
    
    First, it highlights that continuous optimization-based methods need to pick a lower threshold in order to recover edges from standardized models. This is illustrated by Figures \ref{fig:er_sims}, \ref{fig:sfi_sims}, and \ref{fig:sfo_sims}. In addition to lacking signals like varsortability, this may be part of the reason that continuous optimization-based methods did not perform well on the DaO simulations reported in Figures \ref{fig:sim_study_er}, \ref{fig:sim_study_sfi}, and \ref{fig:sim_study_sfo}.

    Second, the need to bound beta coefficients away from zero is a misconception. Instead, this can result in near determinism which creates (near-)violations of faithfulness. Figures \ref{fig:er_sims}, \ref{fig:sfi_sims}, and \ref{fig:sfo_sims} highlight that the ZARX simulations, used by continuous optimization publications, produce a substantial number of these nearly deterministic relationships. Moreover, non-continuous optimization-based CDA are known to perform poorly in the presence of deterministic relationships \citep{glymour2007learning}

    Together, the above two points help to explain the substantial conflict between the simulation results reported in papers that use the ZARX method versus simulation results that use other methods; see the performance of DAGMA and sDAGMA in Figure \ref{fig:sim_study_other} for an illustration of this issue.



\subsection{Limitations}

    The DaO method has several limitations. First, the DaO method and DAG sampling methods in general are naturally limited to acyclic models, however, many causal processes in the natural world are cyclic. Second, we do not present any methodology for simulation or evaluation in the presence of unmeasured confounding. Third, our simulations only produce cross-sectional data and cannot be directly used for evaluating methods intended for time-series contexts. Fourth, while sampling uniformly from the space of all correlation matrices has many advantages, we may want to know how an algorithm is likely to perform on a specific dataset from a specific causal system. In such cases, a simulation that is more tailored to that specific dataset and causal system may be more informative than a DaO simulation.
    

\subsection{Future directions}

    
    Future directions include the following:
    \begin{enumerate}
        \item Extend the DaO method to latent variables.
        \item Extend the DaO method to time-series data.
        \item Design and execute a large simulation study using the DaO method.
        \item Investigate methods for tailoring the DaO method to specific ranges of correlation matrices, such as those similar to a correlation matrix provided as input.
        \item Contribute the DaO method to other tools that have been made for the comparison and validation of CDAs such as Benchpress \citep{rios2021benchpress}.
    \end{enumerate}

\acks{We thank Peter Spirtes and Joe Ramsey for insightful comments and discussion. BA was supported by the NIH under the Comorbidity: Substance Use Disorders and Other Psychiatric Conditions Training Program T32DA037183. EK was supported by the NIH under awards P50MH119569 and UL1TR002494.}

\bibliography{refs}

\begin{thebibliography}{59}
\providecommand{\natexlab}[1]{#1}
\providecommand{\url}[1]{\texttt{#1}}
\expandafter\ifx\csname urlstyle\endcsname\relax
  \providecommand{\doi}[1]{doi: #1}\else
  \providecommand{\doi}{doi: \begingroup \urlstyle{rm}\Url}\fi

\bibitem[Anderson(1984)]{anderson1984introduction}
T.W. Anderson.
\newblock \emph{An Introduction to Multivariate Statistical Analysis}.
\newblock An Introduction to Multivariate Statistical Analysis. Wiley, 1984.

\bibitem[Andrews et~al.(2023)Andrews, Ramsey, S{\'a}nchez-Romero, Camchong, and
  Kummerfeld]{andrews2023fast}
Bryan Andrews, Joseph Ramsey, Rub{\'e}n S{\'a}nchez-Romero, Jazmin Camchong,
  and Erich Kummerfeld.
\newblock Fast scalable and accurate discovery of {DAG}s using the best order
  score search and grow shrink trees.
\newblock In \emph{Proceedings of the Conference on Advances in Neural
  Information Processing Systems}, volume~36, pages 63945--63956, 2023.

\bibitem[Barab{\'a}si and Albert(1999)]{barabasi1999emergence}
Albert-L{\'a}szl{\'o} Barab{\'a}si and R{\'e}ka Albert.
\newblock Emergence of scaling in random networks.
\newblock \emph{Science}, 286:\penalty0 509--512, 1999.

\bibitem[Bello et~al.(2022)Bello, Aragam, and Ravikumar]{bello2022dagma}
Kevin Bello, Bryon Aragam, and Pradeep Ravikumar.
\newblock {DAGMA}: Learning {DAG}s via {M}-matrices and a log-determinant
  acyclicity characterization.
\newblock In \emph{Proceedings of the Conference on Advances in Neural
  Information Processing Systems}, volume~35, pages 8226--8239, 2022.

\bibitem[Bollen(1989)]{bollen1989structural}
Kenneth~A Bollen.
\newblock \emph{Structural equations with latent variables}, volume 210.
\newblock John Wiley \& Sons, 1989.

\bibitem[Chickering(2002)]{chickering2002optimal}
David~Maxwell Chickering.
\newblock Optimal structure identification with greedy search.
\newblock \emph{Journal of machine learning research}, 3:\penalty0 507--554,
  2002.

\bibitem[Colombo et~al.(2014)Colombo, Maathuis, et~al.]{colombo2014order}
Diego Colombo, Marloes~H Maathuis, et~al.
\newblock Order-independent constraint-based causal structure learning.
\newblock \emph{Journal of Machine Learning Research}, 15:\penalty0 3741--3782,
  2014.

\bibitem[Csardi and Nepusz(2005)]{igraph}
Gabor Csardi and Tamas Nepusz.
\newblock The igraph software package for complex network research.
\newblock \emph{InterJournal}, Complex Systems:\penalty0 1695, 11 2005.

\bibitem[Dawid(1979)]{dawid1979conditional}
A.~P. Dawid.
\newblock Conditional independence in statistical theory.
\newblock \emph{Journal of the Royal Statistical Society: Series B
  (Methodological)}, 41:\penalty0 1--15, 1979.

\bibitem[Drton(2018)]{drton2018algebraic}
Mathias Drton.
\newblock Algebraic problems in structural equation modeling.
\newblock In \emph{The 50th anniversary of Gr{\"o}bner bases}, volume~77 of
  \emph{Advanced Studies in Pure Mathematics}, pages 35--87. Mathematical
  Society of Japan, 2018.

\bibitem[Eberhardt(2017)]{Eberhardt2017-wi}
Frederick Eberhardt.
\newblock Introduction to the foundations of causal discovery.
\newblock \emph{International Journal of Data Science and Analytics},
  3:\penalty0 81--91, 2017.

\bibitem[Erd{\H o}s and R{\'e}nyi(1959)]{erdos59random}
Paul Erd{\H o}s and Alfr{\'e}d R{\'e}nyi.
\newblock On random graphs {I}.
\newblock \emph{Publicationes Mathematicae Debrecen}, 6:\penalty0 290--297,
  1959.

\bibitem[Fang et~al.(1990)Fang, Kotz, and Ng]{fang1990symmetric}
Kai-Tang Fang, Samuel Kotz, and Kai~W Ng.
\newblock \emph{Symmetric multivariate and related distributions}.
\newblock Chapman and Hall, 1990.

\bibitem[Geiger and Heckerman(2002)]{geiger2002parameter}
Dan Geiger and David Heckerman.
\newblock Parameter priors for directed acyclic graphical models and the
  characterization of several probability distributions.
\newblock \emph{The Annals of Statistics}, 30:\penalty0 1412--1440, 2002.

\bibitem[Ghosh and Henderson(2003)]{ghosh2003behavior}
Soumyadip Ghosh and Shane~G Henderson.
\newblock Behavior of the {NORTA} method for correlated random vector
  generation as the dimension increases.
\newblock \emph{ACM Transactions on Modeling and Computer Simulation},
  13:\penalty0 276--294, 2003.

\bibitem[Ghosh and Henderson(2009)]{ghosh2009corrigendum}
Soumyadip Ghosh and Shane~G Henderson.
\newblock Corrigendum: Behavior of the {NORTA} method for correlated random
  vector generation as the dimension increases.
\newblock \emph{ACM Transactions on Modeling and Computer Simulation},
  19:\penalty0 1--3, 2009.

\bibitem[Glymour(2007)]{glymour2007learning}
Clark Glymour.
\newblock {Learning the Structure of Deterministic Systems}.
\newblock In \emph{{Causal Learning: Psychology, Philosophy, and Computation}},
  pages 231--240. Oxford University Press, 2007.

\bibitem[Glymour et~al.(2019)Glymour, Zhang, and Spirtes]{Glymour2019-ce}
Clark Glymour, Kun Zhang, and Peter Spirtes.
\newblock Review of causal discovery methods based on graphical models.
\newblock \emph{Frontiers in Genetics}, 10:\penalty0 524, 2019.

\bibitem[Haughton(1988)]{haughton1988choice}
Dominique~MA Haughton.
\newblock On the choice of a model to fit data from an exponential family.
\newblock \emph{The annals of statistics}, pages 342--355, 1988.

\bibitem[Ikeuchi et~al.(2023)Ikeuchi, Ide, Zeng, Maeda, and
  Shimizu]{ikeuchi2023python}
Takashi Ikeuchi, Mayumi Ide, Yan Zeng, Takashi~Nicholas Maeda, and Shohei
  Shimizu.
\newblock Python package for causal discovery based on {LiNGAM}.
\newblock \emph{Journal of Machine Learning Research}, 24:\penalty0 1--8, 2023.

\bibitem[Jack~Kuipers and Moffa(2022)]{kuipers2022efficient}
Polina~Suter Jack~Kuipers and Giusi Moffa.
\newblock Efficient sampling and structure learning of {B}ayesian networks.
\newblock \emph{Journal of Computational and Graphical Statistics},
  31:\penalty0 639--650, 2022.

\bibitem[Kaiser and Sipos(2022)]{kaiser2022unsuitability}
Marcus Kaiser and Maksim Sipos.
\newblock Unsuitability of {NOTEARS} for causal graph discovery when dealing
  with dimensional quantities.
\newblock \emph{Neural Processing Letters}, 54:\penalty0 1587--1595, 2022.

\bibitem[Kalisch and B{\"u}hlman(2007)]{kalisch2007estimating}
Markus Kalisch and Peter B{\"u}hlman.
\newblock Estimating high-dimensional directed acyclic graphs with the
  {PC}-algorithm.
\newblock \emph{Journal of Machine Learning Research}, 8, 2007.

\bibitem[Khalafi and Azimmohseni(2014)]{khalafi2014multivariate}
Mohammad Khalafi and Majid Azimmohseni.
\newblock Multivariate {P}earson type {II} distribution: Statistical and
  mathematical features.
\newblock \emph{Probability and Mathematical Statistics}, 34:\penalty0
  119--126, 2014.

\bibitem[Kuipers et~al.(2014)Kuipers, Moffa, and
  Heckerman]{kuipers2014addendum}
Jack Kuipers, Giusi Moffa, and David Heckerman.
\newblock Addendum on the scoring of {G}aussian directed acyclic graphical
  models.
\newblock \emph{The Annals of Statistics}, pages 1689--1691, 2014.

\bibitem[Kummerfeld et~al.(2023)Kummerfeld, Williams, and
  Ma]{kummerfeld2023power}
Erich Kummerfeld, Leland Williams, and Sisi Ma.
\newblock Power analysis for causal discovery.
\newblock \emph{International Journal of Data Science and Analytics}, pages
  1--16, 2023.

\bibitem[Lam et~al.(2022)Lam, Andrews, and Ramsey]{lam2022greedy}
Wai-Yin Lam, Bryan Andrews, and Joseph Ramsey.
\newblock Greedy relaxations of the sparsest permutation algorithm.
\newblock In \emph{Proceedings of the Conference on Uncertainty in Artificial
  Intelligence}, pages 1052--1062. PMLR, 2022.

\bibitem[Lauritzen et~al.(1990)Lauritzen, Dawid, Larsen, and
  Leimer]{lauritzen1990independence}
Steffen~L Lauritzen, A~Philip Dawid, Birgitte~N Larsen, and H-G Leimer.
\newblock Independence properties of directed {M}arkov fields.
\newblock \emph{Networks}, 20:\penalty0 491--505, 1990.

\bibitem[Lewandowski et~al.(2009)Lewandowski, Kurowicka, and
  Joe]{lewandowski2009generating}
Daniel Lewandowski, Dorota Kurowicka, and Harry Joe.
\newblock Generating random correlation matrices based on vines and extended
  {O}nion method.
\newblock \emph{Journal of multivariate analysis}, 100:\penalty0 1989--2001,
  2009.

\bibitem[Malinsky and Danks(2018)]{Malinsky2018-mi}
Daniel Malinsky and David Danks.
\newblock Causal discovery algorithms: A practical guide.
\newblock \emph{Philosophy Compass}, 13:\penalty0 e12470, 2018.

\bibitem[Melan{\c{c}}on et~al.(2001)Melan{\c{c}}on, Dutour, and
  Bousquet-M{\'e}lou]{melanccon2001random}
Guy Melan{\c{c}}on, Isabelle Dutour, and Mireille Bousquet-M{\'e}lou.
\newblock Random generation of directed acyclic graphs.
\newblock \emph{Electronic Notes in Discrete Mathematics}, 10:\penalty0
  202--207, 2001.

\bibitem[Muller(1956)]{muller1956some}
Mervin~E Muller.
\newblock Some continuous {M}onte {C}arlo methods for the {D}irichlet problem.
\newblock \emph{The Annals of Mathematical Statistics}, 27:\penalty0 569--589,
  1956.

\bibitem[Ng et~al.(2024)Ng, Huang, and Zhang]{ng2024structure}
Ignavier Ng, Biwei Huang, and Kun Zhang.
\newblock Structure learning with continuous optimization: A look and beyond.
\newblock In \emph{Causal Learning and Reasoning}, pages 71--105. PMLR, 2024.

\bibitem[Ouellette(1981)]{ouellette1981schur}
Diane~Val{\'e}rie Ouellette.
\newblock Schur complements and statistics.
\newblock \emph{Linear Algebra and its Applications}, 36:\penalty0 187--295,
  1981.

\bibitem[Pearl(1988)]{pearl1988probabilistic}
Judea Pearl.
\newblock \emph{Probabilistic reasoning in intelligent systems: networks of
  plausible inference}.
\newblock Morgan kaufmann, 1988.

\bibitem[Pearl(2020)]{Pearl2020-xc}
Judea Pearl.
\newblock \emph{Book of Why}.
\newblock Basic Books, 2020.

\bibitem[Peters and B{\"u}hlmann(2014)]{peters2014identifiability}
Jonas Peters and Peter B{\"u}hlmann.
\newblock Identifiability of {G}aussian structural equation models with equal
  error variances.
\newblock \emph{Biometrika}, 101:\penalty0 219--228, 2014.

\bibitem[Price(1976)]{price1976general}
Derek de~Solla Price.
\newblock A general theory of bibliometric and other cumulative advantage
  processes.
\newblock \emph{Journal of the American society for Information science},
  27:\penalty0 292--306, 1976.

\bibitem[Ramsey et~al.(2017)Ramsey, Glymour, S{\'a}nchez-Romero, and
  Glymour]{ramsey2017million}
Joseph Ramsey, Madelyn Glymour, Rub{\'e}n S{\'a}nchez-Romero, and Clark
  Glymour.
\newblock A million variables and more: the fast greedy equivalence search
  algorithm for learning high-dimensional graphical causal models, with an
  application to functional magnetic resonance images.
\newblock \emph{International journal of data science and analytics},
  3:\penalty0 121--129, 2017.

\bibitem[Ramsey and Andrews(2017)]{Ramsey2017-rt}
Joseph~D Ramsey and Bryan Andrews.
\newblock A comparison of public causal search packages on linear, {G}aussian
  data with no latent variables, 2017.

\bibitem[Ramsey et~al.(2018)Ramsey, Zhang, Glymour, Romero, Huang,
  Ebert-Uphoff, Samarasinghe, Barnes, and Glymour]{ramsey2018tetrad}
Joseph~D Ramsey, Kun Zhang, Madelyn Glymour, Ruben~Sanchez Romero, Biwei Huang,
  Imme Ebert-Uphoff, Savini Samarasinghe, Elizabeth~A Barnes, and Clark
  Glymour.
\newblock {TETRAD}---a toolbox for causal discovery.
\newblock In \emph{8th International Workshop on Climate Informatics}, 2018.

\bibitem[Ramsey et~al.(2020)Ramsey, Malinsky, and Bui]{Ramsey2020-wn}
Joseph~D Ramsey, Daniel Malinsky, and Kevin~V Bui.
\newblock algcomparison: Comparing the performance of graphical structure
  learning algorithms with {TETRAD}.
\newblock \emph{Journal of Machine Learning Research}, 21:\penalty0 1--6, 2020.

\bibitem[Reisach et~al.(2021)Reisach, Seiler, and Weichwald]{reisach2021beware}
Alexander Reisach, Christof Seiler, and Sebastian Weichwald.
\newblock Beware of the simulated {DAG}! causal discovery benchmarks may be
  easy to game.
\newblock In \emph{Proceedings of the Conference on Advances in Neural
  Information Processing Systems}, volume~34, pages 27772--27784, 2021.

\bibitem[Reisach et~al.(2023)Reisach, Tami, Seiler, Chambaz, and
  Weichwald]{reisach2023simple}
Alexander Reisach, Myriam Tami, Christof Seiler, Antoine Chambaz, and Sebastian
  Weichwald.
\newblock A scale-invariant sorting criterion to find a causal order in
  additive noise models.
\newblock In \emph{Advances in Neural Information Processing Systems},
  volume~36, pages 785--807, 2023.

\bibitem[Rios et~al.(2021)Rios, Moffa, and Kuipers]{rios2021benchpress}
Felix~L. Rios, Giusi Moffa, and Jack Kuipers.
\newblock Benchpress: a scalable and versatile workflow for benchmarking
  structure learning algorithms for graphical models, 2021.

\bibitem[Schwarz(1978)]{schwarz1978estimating}
Gideon Schwarz.
\newblock Estimating the dimension of a model.
\newblock \emph{The annals of statistics}, pages 461--464, 1978.

\bibitem[Shimizu et~al.(2006)Shimizu, Hoyer, Hyv{\"a}rinen, Kerminen, and
  Jordan]{shimizu2006linear}
Shohei Shimizu, Patrik~O Hoyer, Aapo Hyv{\"a}rinen, Antti Kerminen, and Michael
  Jordan.
\newblock A linear non-{G}aussian acyclic model for causal discovery.
\newblock \emph{Journal of Machine Learning Research}, 7, 2006.

\bibitem[Shimizu et~al.(2011)Shimizu, Inazumi, Sogawa, Hyvarinen, Kawahara,
  Washio, Hoyer, Bollen, and Hoyer]{shimizu2011directlingam}
Shohei Shimizu, Takanori Inazumi, Yasuhiro Sogawa, Aapo Hyvarinen, Yoshinobu
  Kawahara, Takashi Washio, Patrik~O Hoyer, Kenneth Bollen, and Patrik Hoyer.
\newblock Direct{LiNGAM}: A direct method for learning a linear non-{G}aussian
  structural equation model.
\newblock \emph{Journal of Machine Learning Research}, 12:\penalty0 1225--1248,
  2011.

\bibitem[Spirtes(2010)]{Spirtes2010-cg}
Peter Spirtes.
\newblock Introduction to causal inference.
\newblock \emph{Journal of Machine Learning Research}, 11:\penalty0 1643--1662,
  2010.

\bibitem[Spirtes and Zhang(2014)]{Spirtes2014-au}
Peter Spirtes and Jiji Zhang.
\newblock A uniformly consistent estimator of causal effects under the
  {$k$-triangle-faithfulness} assumption.
\newblock \emph{Statistical Science}, pages 662--678, 2014.

\bibitem[Spirtes et~al.(1993)Spirtes, Glymour, and Scheines]{Spirtes1993-od}
Peter Spirtes, Clark Glymour, and Richard Scheines.
\newblock \emph{Causation, Prediction, and Search}.
\newblock Lecture Notes in Statistics. Springer, 1993.

\bibitem[Suter et~al.(2023)Suter, Kuipers, Moffa, and Beerenwinkel]{bidag}
Polina Suter, Jack Kuipers, Giusi Moffa, and Niko Beerenwinkel.
\newblock {B}ayesian structure learning and sampling of {B}ayesian networks
  with the {R} package {BiDAG}.
\newblock \emph{Journal of Statistical Software}, 105:\penalty0 1--31, 2023.

\bibitem[Tsagris(2021)]{tsagris2021new}
Michail Tsagris.
\newblock A new scalable {B}ayesian network learning algorithm with
  applications to economics.
\newblock \emph{Computational Economics}, 57:\penalty0 341--367, 2021.

\bibitem[Tsagris(2023)]{pchc}
Michail Tsagris.
\newblock \emph{pchc: {B}ayesian Network Learning with the {PCHC} and Related
  Algorithms}, 2023.
\newblock URL \url{https://CRAN.R-project.org/package=pchc}.
\newblock R package version 1.2.

\bibitem[Tsamardinos et~al.(2006)Tsamardinos, Brown, and
  Aliferis]{tsamardinos2006max}
Ioannis Tsamardinos, Laura~E Brown, and Constantin~F Aliferis.
\newblock The max-min hill-climbing {B}ayesian network structure learning
  algorithm.
\newblock \emph{Machine learning}, 65:\penalty0 31--78, 2006.

\bibitem[Uhler et~al.(2013)Uhler, Raskutti, Bühlmann, and
  Yu]{uhler2013geometry}
Caroline Uhler, Garvesh Raskutti, Peter Bühlmann, and Bin Yu.
\newblock Geometry of the faithfulness assumption in causal inference.
\newblock \emph{The Annals of Statistics}, 41:\penalty0 436--463, 2013.

\bibitem[Verma and Pearl(1990)]{verma1990causal}
Thomas Verma and Judea Pearl.
\newblock Causal networks: Semantics and expressiveness”.
\newblock In \emph{Uncertainty in Artificial Intelligence}, volume~9 of
  \emph{Machine Intelligence and Pattern Recognition}, pages 69--76.
  North-Holland, 1990.

\bibitem[Wang and Spirtes(2022)]{pmlr-v177-wang22a}
Shuyan Wang and Peter Spirtes.
\newblock A uniformly consistent estimator of non-{G}aussian causal effects
  under the $k$-triangle-faithfulness assumption.
\newblock In Bernhard Sch{\"o}lkopf, Caroline Uhler, and Kun Zhang, editors,
  \emph{Proceedings of the Conference on Causal Learning and Reasoning}, volume
  177 of \emph{Proceedings of Machine Learning Research}, pages 861--876. PMLR,
  2022.

\bibitem[Zheng et~al.(2018)Zheng, Aragam, Ravikumar, and Xing]{zheng2018dags}
Xun Zheng, Bryon Aragam, Pradeep~K Ravikumar, and Eric~P Xing.
\newblock {DAG}s with no tears: Continuous optimization for structure learning.
\newblock In \emph{Proceedings of the Conference on Advances in Neural
  Information Processing Systems}, volume~31, pages 9492--9503, 2018.

\end{thebibliography}


\end{document}